%%Designed for IEEE Transactions on Vehicular Technology, based on bare_jrnl.tex by Michael Shell.
%%December. 2015
%%Length Requirements: The complete manuscript  should be prepared in final IEEE typesetting with maximum page length limited to 15 pages for a Regular Paper and 5 pages  for a Correspondence. 
%%Contact Info: admin-tvt@ece.ufl.edu
%%Designed by TVT editorial office

\documentclass[journal,10pt]{IEEEtran}

\usepackage{cite}
\usepackage{amsmath,amssymb,amsfonts}
\usepackage{algorithmic}
\usepackage{graphicx}
\usepackage{textcomp}
\usepackage{xcolor}
\usepackage{subfigure}
\usepackage{hhline}
\def\BibTeX{{\rm B\kern-.05em{\sc i\kern-.025em b}\kern-.08em
    T\kern-.1667em\lower.7ex\hbox{E}\kern-.125emX}}

\usepackage{tikz}
\newcommand*\circled[1]{\tikz[baseline=(char.base)]{
    \node[shape=circle,draw,inner sep=1pt] (char) {#1};}}

\usepackage{amsthm}
\newtheorem{definition}{Definition} [section]
\newtheorem{theorem}{Theorem}  [section]

\newtheorem{prob}{Problem} [section]  

\begin{document}

\title{Solving the Federated Edge Learning Participation Dilemma: A Truthful and Correlated Perspective
\thanks{This work is partially supported by the US NSF under grants CNS-2105004, OAC-1839746, DGE-2011117, CNS-185210, CNS-1560020. %, and the National Natural Science Foundation of China under grant No. 62002067.
}}

\author{Qin Hu, Feng Li, Xukai Zou, Yinhao Xiao
\IEEEcompsocitemizethanks{
\IEEEcompsocthanksitem Qin Hu and Xukai Zou are with the Department of Computer and Information Science, Indiana University-Purdue University Indianapolis, IN, USA. \protect \\ Email: qinhu@iu.edu, xzou@iupui.edu
\IEEEcompsocthanksitem Feng Li is with the Department of Computer Information and Graphics Technology, Indiana University-Purdue University Indianapolis, IN, USA. \protect \\
Email: fengli@iupui.edu
\IEEEcompsocthanksitem Yinhao Xiao is with the School of Information Science, Guangdong University of Finance and Economics, Guangzhou, China. \protect \\
Email: 20191081@gdufe.edu.cn
}
}

%\thanks{Manuscript received XXX, XX, 2015; revised XXX, XX, 2015.}}

\markboth{IEEE Transactions on Vehicular Technology,~Vol.~XX, No.~XX, XXX~2021}
{}
%{Shell \MakeLowercase{\textit{et al.}}: Bare Demo of IEEEtran.cls for Journals}

\maketitle

\begin{abstract}
%Due to the explosive amount of data generated at the network edge and facilitated by the improved computation capabilities of end devices, 
An emerging computational paradigm, named \textit{federated edge learning} (FEL), enables intelligent computing at the network edge with the feature of preserving data privacy for edge devices. Given their constrained resources, it becomes a great challenge to achieve high execution performance for FEL. Most of the state-of-the-arts concentrate on enhancing FEL from the perspective of system operation procedures, taking few precautions during the composition step of the FEL system. Though a few recent studies recognize the importance of FEL formation and propose server-centric device selection schemes, 
the impact of data sizes is largely overlooked. %, with the assumption of perfect information availability and uniform data usefulness of all devices. 
In this paper, we take advantage of game theory to depict the decision dilemma among edge devices regarding \textit{whether to participate in FEL or not} given their heterogeneous sizes of local datasets. 
For realizing both the individual and global optimization, the server is employed to solve the participation dilemma, which requires accurate information collection for devices' local datasets. Hence, we utilize \textit{mechanism design} to enable truthful information solicitation. 
With the help of \textit{correlated equilibrium}, we derive a decision making strategy for devices from the global perspective, which can % simultaneously ensure the individual and global rationality to 
achieve the long-term stability and efficacy of FEL. For scalability consideration, we optimize the computational complexity of the basic solution to the polynomial level. 
%Moreover, to deal with the problem of malicious devices dishonestly reporting their data sizes, we utilize \textit{mechanism design} to empower the server to collect truthful information so as to secure the reliability of the participation decision scheme. 
Lastly, extensive experiments based on both real and synthetic data are conducted to evaluate our proposed mechanisms, with experimental results demonstrating the performance advantages.
\end{abstract}
\begin{IEEEkeywords}
Edge computing, federated learning, decision making, game theory, mechanism design.
\end{IEEEkeywords}

\IEEEpeerreviewmaketitle

\section{Introduction}
%Edge computing引入，介绍FEL是什么
As the amount of data generated at the network edge grows explosively, the conventional cloud computing can hardly afford the high bandwidth consumption or meet the low-latency requirement of smart applications on mobile devices, which leads to the emergence of mobile edge computing \cite{xiao2019edge}. Based on a recent report \cite{market}, edge computing has achieved a global market valuing \$3.6 billion in 2020, which is estimated to reach \$15.7 billion by 2025 with an annual growth rate of 34.1\%. Meanwhile, assisted by growing computation power of devices, various machine learning (ML) algorithms running on the edge becomes prevailing \cite{yang2020artificial}. Specifically, 
federated learning (FL) framework has been widely deployed in this scenario to address the privacy concerns of data owners, and thus being named as federated edge learning (FEL), which trains ML models by relying on the collaboration of distributed edge devices conducting local training and submitting model updates without explicitly disclosing their original data. 

%FEL出现的问题，现有大多数工作focus在FEL过程中，忽略了take precaution的power，这是可以达到do more with less的效果的。本文的角度就是考虑在FEL正式开始之前进行precaution。现有most related two work，有什么缺点
Considering that battery-powered devices are only available to restrained resources for training ML models, the state-of-the-art studies mostly focus on optimized control during the learning process, including communication resource allocation and scheduling \cite{abad2020hierarchical,zeng2019energy,yang2019scheduling,amiri2020update,yang2020age,lim2021decentralized,lim2021dynamic},  FL algorithm upgrade \cite{wang2019adaptive,zhu2019broadband,yang2020federated,amiri2020machine,tran2019federated}, etc. 
%However, the limitation of computation, storage, communication and energy resources at the network edge introduces serious challenges to implement FEL. To solve these issues, plenty of existing work are engaged in studying various optimization and control problems during the FEL process, such as resource allocation \cite{abad2020hierarchical,zeng2019energy}, transmission scheduling \cite{yang2019scheduling,amiri2020update,yang2020age} and learning algorithm oriented improvement \cite{wang2019adaptive,zhu2019broadband,yang2020federated,amiri2020machine,tran2019federated}. 
Although the above optimization researches may perform well for the given set of participating devices, the composition step of FEL has long been overlooked. An inappropriate formation of the FEL system can result in low convergence speed and high computational cost. 
Being aware of this, a few recent studies aim at improving the system performance prior to the actual FEL procedures. Specifically, some researchers \cite{kang2019incentive,nishio2019client} select the appropriate set of devices to join FEL under communication or computational cost constraints, and others \cite{zhan2020incentive,zhan2020learning} design incentive mechanisms to elicit device participation in FEL based on the Stackelberg game. However, the former type of studies fail to consider the heterogeneity of devices' local dataset sizes and select devices mainly from the server's perspective, challenging the efficiency and sustainability of the formed FEL setup; 
while the latter ones usually assume the availability of perfect information and uniform data usefulness for all devices, % can contribute to FEL no matter what the size is, either of 
	which may not hold in practice.

%我们怎么解决的，该问题的challenge是什么：不完全信息博弈，不可靠的size信息提交
To address the above challenges, we study the problem of FEL system composition %by solving the devices' decision dilemma about \textit{whether to participate in FEL or not} 
given their various sizes of individually collected datasets which are used for local learning. 
The rationale is that for devices with small local datasets, the necessary computational and communication consumption of joining FEL may not be compensated commensurately with the benefits brought by the finally returned ML models. This can, on the one hand, discourage the continuous contribution of devices in FEL, and on the other hand, degrade the cost-efficiency of the whole FEL system. Thus, the main issue for every device is to decide on \textit{whether to participate in an FEL task with the current local dataset?} 
From this perspective, the profit of devices is considered thoroughly, which can benefit the FEL systems with long-term stability and efficacy maintenance. 

However, there exists a major challenge for achieving optimal decision making that edge devices have no access to the complete information about the current decision strategies and results of other peering devices with respect to joining the current FEL or not. And in practice, it is the participation decisions of all devices that jointly affect the performance of the finally trained global model and further the overall benefit of devices. % can receive from the edge server.  
To solve this concern, we take advantage of the central location of the server in FEL to assist in calculating the optimal decision strategies for all devices that have various data sizes used for local training. Specifically, an optimization problem based on the \textit{correlated equilibrium} of the participation game is formulated and resolved, where the contribution of each device to the global model and the corresponding cost of resource consumption are examined for depicting individual profit. %, and further becoming an index to reveal the property of equilibrium. 

As a critical parameter involved in the derivation of the participation decision results, another challenge comes from the integrity of data size information submitted from devices. 
%After deriving the above game-theoretic participation decision results, there exists another potential dilemma that 
Considering that the local datasets are invisible to the edge server, devices may intentionally to report the false information of their local dataset sizes due to their intrinsic selfishness and potential attractions of making extra income with undisclosed data. This obviously brings huge difficulty to the collection of critical impact factors for decision making, which can further invalidate the design and derivation of aforementioned decision strategies. % due to the inaccurate input of environment parameters. 
To get rid of this problem, we resort to \textit{mechanism design} for soliciting reliable submissions of local data sizes from devices without knowing other private information, such as the intentions of being malicious. By this means, an optimal game rule can be configured by the server and sent to each device, which is proved to be incentive-compatible, thus leading to the truthful reports from devices based on their real private information.

%it is nontrivial to design such a participation decision making scheme from the perspective of devices. The reasons are two-fold. On the one hand, all devices desire to have an aggregated trained model with excellent learning performance finally, which is positively related to their overall participation degree. This global information cannot be known by any device in advance for making participation decisions. On the other hand, every device may behave conservatively with respect to consuming individual resources for improving the collective interest, which leads to another uncertainty in decision making. 

%本文如何建模，如何设计方案
In summary, we make the following contributions (the first two are inheriting from the preliminary version \cite{globecom}):
\begin{itemize}
\item A \textit{participation game} is formulated to describe the intertwined conflict and collaboration among all devices, which models the influence of devices' local data sizes on the global model  and individual cost of joining FEL for payoff definition.
\item To prepare for solving the participation decision dilemma in FEL, we utilize \textit{mechanism design} theory to enable truthful data size information collection from all devices, where the detailed implementation process and optimality analysis are provided.
\item To jointly achieve the individual and global rationality, we design a \textit{correlated equilibrium} based scheme to address the participation dilemma, which is further improved by decreasing the running cost to a polynomial level.
%\item To fundamentally secure the reliability of the proposed participation decision scheme, we utilize \textit{mechanism design} theory to enforce truthful data size information collection, where the detailed implementation process and optimality analysis are provided.
\item Real-world dataset is employed to generate experimental parameters for practicality purpose, based on which extensive simulation experiments are conducted to evaluate both the participation decision scheme and the truthful data size information collection mechanism.
\end{itemize}

%organization
The rest of the paper is organized as follows. The related work are investigated in Section \ref{sec:related}. We present the problem formulation of participation decision in Section \ref{sec:formulation} and the specific solution in Section \ref{sec:solution}. % which includes two subsections to elaborate the basic game-theoretic solution and the improved one. 
The mechanism design for truthful data size collection is reported in Section \ref{sec:mechanism}. 
Experimental evaluation of our proposed schemes is in Section \ref{sec:experiment}, % based on both the real-world and synthetic data. %where the evaluation results demonstrate the 
and the whole paper is concluded in Section \ref{sec:conclusion}.

\section{Related Work}\label{sec:related}

As a fresh concept emerging in the recent years, the current research on FEL performance improvement can be generally classified into in-operation optimization \cite{abad2020hierarchical,zeng2019energy,yang2019scheduling,amiri2020update,yang2020age,wang2019adaptive,zhu2019broadband,yang2020federated,amiri2020machine,tran2019federated,lim2021decentralized,lim2021dynamic} and beforehand planning \cite{kang2019incentive,nishio2019client,zhan2020learning,zhan2020incentive}.

For deploying FL at the network edge, substantial efforts haven been made on \textit{resource allocation} \cite{abad2020hierarchical,zeng2019energy,lim2021decentralized,lim2021dynamic}, \textit{transmission scheduling} \cite{yang2019scheduling,amiri2020update,yang2020age}, and \textit{learning algorithm refinement} \cite{wang2019adaptive,zhu2019broadband,yang2020federated,amiri2020machine,tran2019federated}. In \cite{abad2020hierarchical}, the problem of resource allocation was investigated in the proposed hierarchical FL framework with devices clustered to train models before reaching to the global aggregator. An energy-efficient radio resource allocation scheme devised in \cite{zeng2019energy} assigned more bandwidth to FEL participants with lower computing power for the sake of aggregation synchronization. 
Regarding a new paradigm named hierarchical federated learning using the  intermediate model aggregation to achieve higher communication efficiency, Lim \textit{et al.} \cite{lim2021decentralized,lim2021dynamic} studied the dynamic resource allocation with the help of game theoretical tools, including the evolutionary game, Stackelberg game, and auction.
Yang \textit{et al.} \cite{yang2019scheduling} studied three classes of transmission scheduling mechanisms, i.e., random, round-robin, and proportional fair, aiming at the optimal convergence rate for FL. To design better scheduling policies for FEL, two comparable studies  \cite{amiri2020update,yang2020age} were conducted to balance the channel conditions and local model updates of participated devices. Concentrating more on the FEL training performance, the authors \cite{wang2019adaptive} examined the convergence bound of gradient descent to inspire the best aggregation frequency of global model given limited resources. And three similar research \cite{zhu2019broadband,yang2020federated,amiri2020machine} took advantage of the \textit{over-the-air computation}, which relies on the feature of multiple access channel superposition, to realize more efficient global aggregation. Further, trade-offs between FL training accuracy, latency and devices' energy cost were  accomplished in \cite{tran2019federated} .

Rather than optimizing critical steps during the learning process, several studies indicate that taking precautions in composing the FEL system can bring more benefits, where \textit{device selection} for filtering out unqualified participants and \textit{incentive mechanism design} for attracting participation of devices are two main types of research. For device selection in FEL, a reputation based scheme to identify trustworthy devices was proposed in \cite{kang2019incentive}, after which an incentive mechanism based on the contract theory was designed to encourage the submissions of high-quality data; and a novel scheme named FedCS \cite{nishio2019client} was devised to embrace an extra step of device selection considering the heterogeneity of computation and communication resources. While for incentive mechanism design, Zhan \textit{et al.} \cite{zhan2020learning,zhan2020incentive} utilized game theory and deep reinforcement learning (DRL) to derive the optimal contribution for devices, i.e., the amount of contributed data in \cite{zhan2020learning} and the devoted CPU-cycle frequency in \cite{zhan2020incentive}, and the best payment policy for the edge server. 

It is clear that most of the existing work concentrate on the optimization during FEL process without considering the importance of FEL organization. Although a few recent studies design server-centric device selection schemes and incentive mechanisms before the beginning of FEL, they either fail to consider the impact of data sizes from devices on learning performance, or have strong assumption on complete information availability and uniform data usefulness. To overcome these shortages, we take advantage of mechanism design theory to collect truthful data size information from devices, based on which a correlated participation decision scheme is proposed for devices with various generated data sizes.

\section{Problem Formulation}\label{sec:formulation}

\subsection{System Model}

\begin{figure}[t]
\centering
\includegraphics[width=0.35\textwidth]{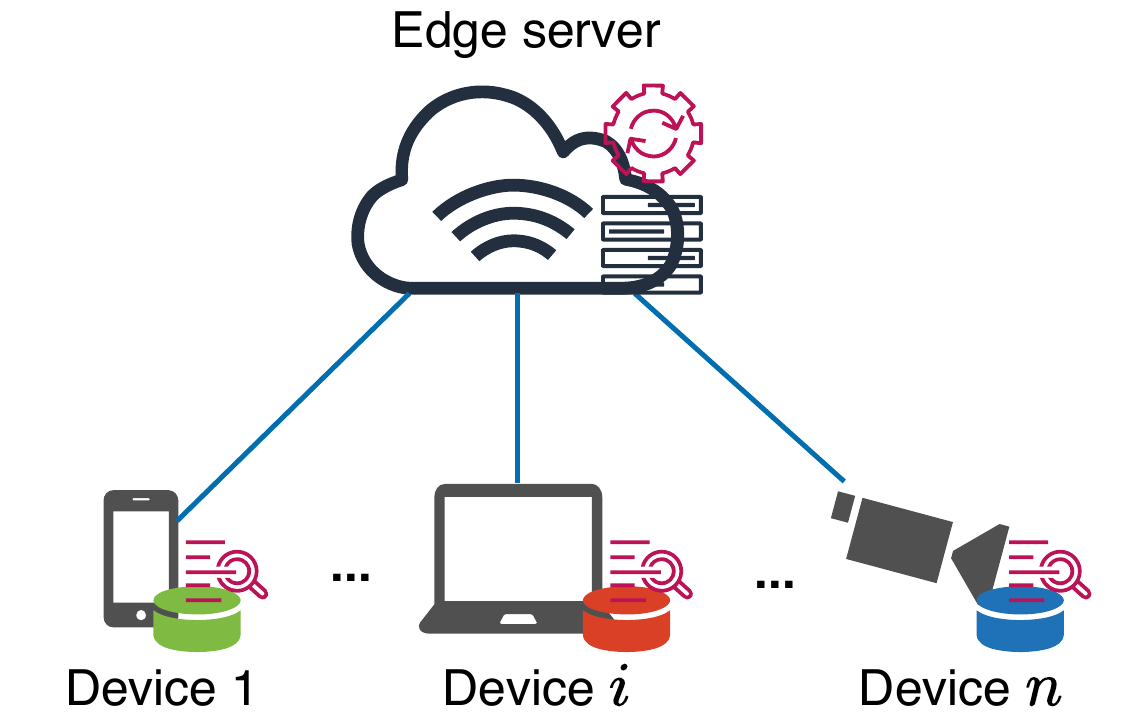}
\caption{The FEL system.}
\label{fig:system}
\end{figure}

As illustrated in Fig. \ref{fig:system}, an FEL system usually consists of one edge server and $n$ edge devices, denoted as $\mathbf{d} = \{d_1,d_2,\cdots,d_n\}$, with their available computation and communication resources registered on the server.
All participants in this FEL system work collaboratively to train an accurate ML model so as to to provide real-time and smart responses to devices. Without the loss of generality, the ML model trained in FEL is assumed to be a convolutional neural network (CNN) based classifier. Here any device $d_i$ desires to receive a robust ML model from the edge server after the whole FL training task finishes, where the model is trained based on the heterogeneous data from all devices; %, so that they can provide a higher quality of service to the device users; 
while the edge server coordinates with all devices using wireless networks to send, receive and aggregate model parameters. 

The above collaboration paradigm seems to operate well most of the time; however, in the case of some devices generating too less amount of data to conduct local model training, the FEL system formed above may not work efficiently. For example, surveillance cameras can only capture still pictures during late night, which cannot serve as a local dataset for model training. In fact, the sizes of devices' local datasets in FEL can be diverse, where some of them may fail to satisfy the minimum requirement for local model training, such as only a few data samples. It is worth noting that even for the case of non independent and identically distributed (non-IID) data in FEL, a local dataset with too-small size is still not favorable for conducting model training at the corresponding device. 
Without taking any actions, directly involving these devices in FEL can lead to huge wastes of computation and communication resources for both the server and devices while bring no benefit to the ML model. In the worst case, this can even degrade the FEL efficiency since some meaningless local model updates would also be aggregated to update the global model. 
%we need to solve the following two problems: 
%\begin{enumerate}
%\item Considering that edge devices are  How can the edge server elicit the truthful information of data size of any device $d_i$, denoted by $s_i$?
%\item How should $d_i$ make a decision on whether to participate in this round of FEL given $s_i$?
%\end{enumerate}
%Here we use $s_i$ to denote the size of the dataset generated at $d_i$ before a round of FEL begins. Then, 
For simplicity, we can summarize this problem as 
\begin{prob}\label{prob:1} 
How should each device $d_i$ decide on whether to participate in an FEL given its local data size $s_i$?
\end{prob}

\subsection{Participation Game Formulation}
We denote the participation decision of $d_i$ as $p_i \in \{0,1\}$, with 1 (or 0) denoting that $d_i$ decides (or not) to join this FEL system. Then the decision vector indicating all devices' decisions can be expressed as $\mathbf{p}=(p_1,p_2,\cdots,p_n)$.

\subsubsection{Total Incentive}
Due to the fact that the final ML model is holistically trained with the data from participated devices, the model performance is jointly affected by the participation decisions of all devices. %as well as their corresponding data sizes, 
To guarantee the long-term liveliness of the FEL system, we consider that the server will provide a total incentive (e.g., monetary reward), denoted as $\pi$, based on the quality of the finally returned classification model. To be specific, we can define $\pi$ as
\begin{equation}\label{eq:total}
\pi(\mathbf{p}) = 
\begin{cases}
0,~ &\sum_{i=1}^n p_i = 0, \\
\alpha (1 - a(\sum_{i=1}^n p_is_i)^{-b}), ~ &o/w,
\end{cases}
\end{equation}
where $\alpha > 0$ is a system parameter and $a,b\geq 0$ are tuning scalars of the power law function modeling the classification error. In particular, $\sum_{i=1}^n p_is_i$ in \eqref{eq:total} reflects the effectively total size of the training data contributed by all devices. And the power law modeled error is inspired by \cite{chen2018my, johnson2018predicting}, depicting the non-linear relationship between the classification error and training data size, which reflects the feature of an increasing total size of training data $\sum_{i=1}^n p_is_i$ corresponding to a lower classification error. The lower error can offer the FEL system a better ML model, bringing a higher incentive for all devices. 

\subsubsection{Participation Income and Cost}\label{subsubsec:income}
Then, the incentive of every device can be fairly determined according to their respective contribution. 
Although there exist some studies proposing fancy mechanisms to quantify the contribution of FL clients \cite{wang2019measure,song2019profit}, we consider the reward any device $d_i$ will receive is proportional to the size of its local dataset, which has also been employed in some recent work \cite{zhan2020learning}. Specifically, we define the reward of $d_i$ as follows:
 
%we introduce the following definition as a performance metric.
%\begin{definition}[Classification Error]
%Given the participation decision vector $\mathbf{p} = (p_1,p_2,\cdots,p_n)$ and their corresponding data sizes $\{s_1,s_2,\cdots,s_n\}$, the classification error of the final trained model, denoted as $\varepsilon$, is given by
%\begin{equation}\label{eq:error}
%\varepsilon (\mathbf{p}) = a(\sum_{i=1}^n p_is_i)^{-b},
%\end{equation}
%where $a,b\geq 0$ are tuning scalars.
%\end{definition}
%
%%we depict the model performance by classification error, denoted as $\varepsilon$, which is defined as follows:
%In the above equation, $\sum_{i=1}^n p_is_i$ reflects the effective total size of the training data contributed by all devices even though each device holds the dataset separately. And this power law function is inspired by existing work \cite{chen2018my, johnson2018predicting} to capture the non-linear relationship between the classification error and training data size with the feature that when the total size of training data $\sum_{i=1}^n p_is_i$ approaches infinity, the classification error $\varepsilon$ converges to a stable value.
%
%After receiving the final model with error $\varepsilon$ from the server, $d_i$ can provide more intelligent service to its user and thus have a higher income which can be described as follows,
\begin{equation}\label{eq:reward}
\Phi_i (\mathbf{p}) =  \frac{p_i s_i}{\delta + \sum_{j=1}^n p_j s_j} \pi(\mathbf{p}),
\end{equation}
%In the above equation, $\pi$ denotes the total incentive provided by the edge server for accomplishing a round of FL, e.g., monetary or service based rewards; and 
where $\delta$ is a positive but small number close to zero, which is used to handle the special case of no device participating FEL. In detail, if there is no device contributing to the training process, we have $\sum_{j=1}^n p_j s_j = 0$, and the existence of $\delta$ makes the the denominator of \eqref{eq:reward} never be zero and the definition of reward meaningful. While once any device joins a round of FEL, the impact of $\delta$ on reward distribution will be trivial since the data size is much larger.

%reaches the maximum value when the error $\varepsilon=0$ while approaches zero when $\varepsilon$ is large.
%Here we adopt a variance of the sigmoid function, presenting a reverse `S'-like shape, to capture the fact that when the error $\varepsilon$ becomes small (or large) enough, the income $\Phi_i$ will not increase (or decrease) as sharply as other cases in the middle.

For any device $d_i$ contributing to FEL, certain amounts of computation and communication resources will be consumed. According to \cite{tran2019federated}, the computing cost will be positively proportional to the local training data size $s_i$, while the communication cost is determined by the model size and respective wireless channel conditions, denoted as $w_i$.  
%can be assumed to be homogeneous for all participated devices, denoted as $w$, since every participator receives the same initial model from the server and will submit the same size of updates of the locally trained model in each training iteration. 
Thus, we can calculate the cost for a participating device $d_i$ as
\begin{equation}
\Omega_i = \beta_i s_i + \gamma_i w_i,\nonumber
\end{equation}
where $\beta_i,~\gamma_i >0$ are constant scalars.

\subsubsection{Profit and Game Definitions}
Based on the previous two subsections, the profit of any device $d_i$ can be defined as follows.

\begin{definition}[Device's Profit]\label{def:profit}
Given the decision vector $\mathbf{p}$, the profit of device $d_i$ is 
\begin{equation}
V_i(\mathbf{p}) =\Phi_i - p_i \Omega_i.\nonumber
\end{equation}
%where $\alpha_i > 0$ is a constant parameter.
\end{definition}
Based on the above definition, the profit $V_i$ consists of the income gained from the server and the participation cost consumed for local model training and updating.

As mentioned in Section \ref{subsubsec:income}, $\Phi_i$ is jointly affected by the participation decisions of all devices. Meanwhile, the impact of participation cost $\Omega_i$ on the profit $V_i$ is individually influenced by the decision $p_i$. Therefore, the profit of each device $d_i$ is not only determined by its own participation decision $p_i$ but also collectively decided by the decisions of other devices, which is denoted as $\mathbf{p}_{-i}=(p_1,\cdots,p_{i-1},p_{i+1},\cdots,p_n)$ for simplicity. And the following \textit{Participation Game} can thus model the intertwined relationship among all devices.

\begin{definition}[Participation Game]\label{def:game}
In this participation game, any device $d_i$ as a game player chooses a strategy $p_i$ regarding whether to participate in an FEL system to get a payoff of $V_i(\mathbf{p})$.
\end{definition}

In the participation game, any rational player $d_i$ desires to obtain the maximum profit $V_i(\mathbf{p})$. % which can also be written as $V_i(p_i,\mathbf{p}_{-i})$ to indicate the different impacts of $p_i$ and others. % with the best participation decision. To differentiate the impact of $p_i$ from others, we can rewrite the profit of $d_i$ as $V_i(p_i,\mathbf{p}_{-i})$ with $\mathbf{p}_{-i}$ denoting the decisions of all other devices except for device $i$. Thus, 
However, as we can see in the definition of $V_i(\mathbf{p})$, without knowing others' decisions $\mathbf{p}_{-i}$, no player can easily achieve this goal in an individual manner via choosing an optimal $p_i$. To get rid of this dilemma, the edge server, residing in a core position coordinating with all devices in FEL, presents great potential to address this challenge from a global viewpoint, which requires the reliable collection of necessary information from devices. Apart from the computation and communication resource parameters that are critical for FEL training and updating processes, the local data size $s_i$ turns into an important factor for participation decision making as presented in Problem \ref{prob:1}. 

However, since the local datasets are not visible to the edge server, %we are not supposed to overlook the possibility that 
there might exist some malicious devices deceiving the edge server and other peering devices via providing fake information of $s_i$. In particular, edge devices may intentionally report either a lower or higher value than the real one of $s_i$. Here the fake lower value could enable selfish devices to avoid the participation of FEL for resource saving while the intentionally-fabricated higher $s_i$ may empower unqualified devices to obtain abundant intermediate learning results for other uses. Both types of fake information submission can benefit the malicious devices at the cost of damaging the interests of the edge server and other benign devices. 
Therefore, we summarize this challenge as another problem:
\begin{prob}\label{prob:2}
How can the edge server elicit the truthful information of local data size $s_i$ from any device $d_i$?
\end{prob}

In the following sections, we will answer the above two problems in reverse order since the accurate information collection acts as a foundation for the optimal participation decision making. In detail, we first solve Problem \ref{prob:2} in Section \ref{sec:mechanism}, based on which Problem \ref{prob:1} can be addressed in Section \ref{sec:solution}.

\section{Mechanism Design for Truthful Data Size Information Solicitation}
\label{sec:mechanism}

%With the game-theoretic solution presented in Section \ref{sec:solution}, we can clearly find that only if the data size $s_i$ submitted during the first step shown in Fig. \ref{fig:process} is trustworthy, can the proposed scheme function well. 
%However, since the local datasets are not visible to the edge server, %we are not supposed to overlook the possibility that 
%there might exist some malicious devices deceiving the edge server and other peering devices via providing fake information of $s_i$. In particular, edge devices may intentionally report either a lower or higher value than the real one of $s_i$. Here the fake lower value could enable selfish devices to avoid the participation of FEL for resource saving while the intentionally-fabricated higher $s_i$ may empower unqualified devices to obtain abundant intermediate learning results for other uses. Both types of fake information submission can benefit the malicious devices at the cost of damaging the interests of the edge server and other benign devices. 
As mentioned above, edge devices may submit incorrect information about their local data sizes, which can severely affect the next-step calculation of participation decisions for all devices. 
To characterize their malice in this process, we define a probability $\Theta_i\in [0,1]$ for each device $d_i$. 
In this section, we resort to \textit{mechanism design theory} \cite{hurwicz2006designing} to eliminate this undesirable phenomenon via enforcing their truth-telling behaviors during the step of submitting $s_i$.

Technically, the mechanism design theory aims to find solutions  of incentive schemes to achieve desired goals in private-information games using an objective-first manner, which fundamentally relies on the sweeping result of \textit{revelation principle}. This principle advocates that for any incomplete-information game, i.e., Bayesian game, each Nash equilibrium is corresponding to another direct equilibrium achieved by an incentive-compatible mechanism where every player honestly reports the private information. Thus, we can easily solve the Bayesian-Nash equilibrium of the mechanism design game with incomplete information of the opponent's real strategy, via assuming every player tells the truth if we can guarantee the incentive-compatibility of the proposed mechanism.

In our scenario, devices are reluctant to reveal the real $s_i$ with an probability $\Theta_i$ as the fabricated data size information can bring them extra profit. % which is closely related to a unit price, denoted as $\Theta_i$. 
Since $\Theta_i$ heavily impacts the individual interest of every device, it is obvious that no one would like to share this private information to others, including the edge server. Hence, for the mechanism design game of data size information collection, the objective of the server is to elicit $s_i$ from devices based on their truthful malice using the power of reward policy development, without asking for their private information $\Theta_i$.

In the following, we will elaborate the mechanism design problem and results to help the edge server collect data size information from devices, % in Step \circled{1}, 
thus facilitating the subsequent participation decision making process discussed in Section \ref{sec:solution}. %shown in Fig. \ref{fig:process}.
Considering that the designed mechanism will be conducted between the server and every device, we omit the subscript $i$ for brevity.

\subsection{Utility Functions}\label{subsec:utility}
As mentioned above, the device needs to submit $s$ for obtaining decision making information and the finally well-trained ML model to better serve users. Since these subsequent outcomes are sent back from the edge server and significantly affect the benefit of the device, we can regard this procedure as a reward policy, denoted by a coefficient $r$,  % to capture this phenomenon, 
which is determined by the edge server. And for the device, the controllable strategy is its submitted data size information $s$. 

Based on these definitions, we can express the expected utility of the device, denoted by $U_d$, during a time period $[0,T]$ as 
\begin{equation}\label{eq:ud}
U_d (r,s) = \int_0^T (rs + x(\Theta,s)) dt,
\end{equation}
where the first term of the integrand indicates the normal reward of the device obtained from the action of submitting the data size information to the server, while $x(\Theta,s)$ denotes the extra profit that the device can harvest via maliciously reporting untruthful data size information. As the extra profit is positively related to both the data size $s$ and the probability of exerting malice $\Theta$, here we can define $x(\Theta,s)=A_d \Theta s + B_d$ as an example, with $A_d,B_d$ being non-negative scalars. % In \eqref{eq:ud}, the first term indicates the normal reward of the device obtained from the action of the server, where the product relationship between $r$ and $s$ 
%It is worth noting that the product relationship between $r$ and $s$ is also consistent with the definition of $\Phi_i$ in \eqref{eq:reward}.

Similarly, the  expected utility of the edge server, denoted by $U_e$, during $[0,T]$ can be represented by
\begin{equation}\label{eq:ue}
U_e (r,s) = \int_0^T (R(r,s) - y(\Theta,s)) dt,
\end{equation}
where $R(r,s)$ is the normal reward that the server can obtain by collecting the data size information from the device, and $y(\Theta,s)$ is the potential loss of the server when the device maliciously hides the real data size information to obtain extra profit, %similarly defined according to $x(\Theta,s)$ 
defined as $y(\Theta,s) = A_e \Theta s r + B_e$ with $A_e,B_e \geq 0$ being scalar parameters. Note that  the definition of $y(\Theta,s)$ is slightly different from that of $x(\Theta,s)$ since the impact of reward coefficient on the server's loss is considered here. 

For the server's reward $R(r,s)$, considering that the larger the data size of the device, the higher the benefit of the server, while an unfitted reward coefficient can decrease the benefit of the server, we model it as:
%utilize a sigmoid function based model to describe these relationships:
\begin{equation}\label{eq:R}
R(r,s) = \frac{\sigma}{1+e^{-(s-s_0)}}-\rho(r-r_0)^2,
\end{equation}
where $\sigma,\rho>0$ are constant parameters; $s_0$ and $r_0$ are respectively the expected values of the data size and reward coefficient according to the historical and global information. In particular, a sigmoid function is employed in the first part of \eqref{eq:R} to describe the influence of $s$ on the server's reward. 
To be specific, when the submitted data size $s<s_0$, the reward of the server is limited but its gradient gradually increases; while if $s>s_0$, the reward asymptotically approaches the largest value with a decreasing slope. This corresponds to the fact that a smaller dataset is definitely not preferred for the server, but a too-large dataset suffers from diminishing marginal contribution to the server's reward. Besides, the latter part of \eqref{eq:R} captures the feature that the server will not assign a too-high or too-low $r$ to the device, because a higher one might decrease the server's reward if the device contributes less in FEL while a lower value of reward coefficient can hurt the device's interest and discourage its future contribution.

\noindent \textbf{Remark:} For \eqref{eq:ud} and \eqref{eq:ue}, it is worth noting that the device's normal reward $rs$ is not explicitly deducted from the server's reward, where the underlying reason is that the reward of the device is not directly distributed by the server but closely related to other environment parameters, such as the owner satisfaction of the device.

\subsection{Mechanism Design Process}\label{subsec:process}
With the above-defined utilities and the strategies of both the device and server, we can depict the interaction process in the mechanism design game in Fig. \ref{fig:mechanism} and summarize the general steps as follows:
\begin{itemize}
\item The server sends a game rule (i.e., mechanism) $r^*(s)$ to the device, which is usually designed to maximize its expected utility $U_e$.% defined in \eqref{eq:ue}.
\item With the received $r^*(s)$, the device can derive the best action $s^*$ based on the true private information $\Theta$. Typically, the devices determines $s^*(\Theta)$ with the goal of maximizing the expected utility $U_d$.
\item According to the derived $s^*$ and the corresponding $r^*(s^*)$, the device can make a decision on whether to accept this game rule. If it turns out to be beneficial, the device will send back $s^*$ to the server; if not, the device will keep silent.
\item Once the server receives $s^*$ in a given time limit, the specific value of $r^*$ can be obtained. And then the server will proceed to the next step of calculating the decision making vector for all devices as mentioned in Section \ref{sec:solution}.
\end{itemize}
%For better illustration, we present the above steps in Fig. \ref{fig:mechanism}.

\begin{figure}[h]
\centering
\includegraphics[width=0.45\textwidth]{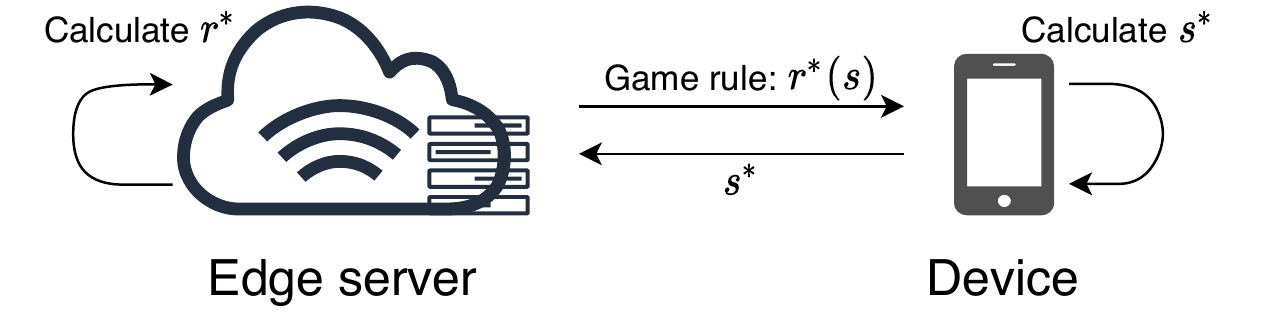}
\caption{Mechanism design process.}
\label{fig:mechanism}
\end{figure}

It is worth noting that the device may strategically report another $\tilde{s}$ based on the fake private information $\tilde{\Theta}$ in the third step. However, under the function of revelation principle, the incentive compatibility of the proposed mechanism $r^*(s)$ will enable the device to find out that truthful report is exactly the optimal choice, which will be rigorously proved and analyzed in Section \ref{subsec:analysis}.

\subsection{Derivation of Optimal Strategies}\label{subsec:strategy}
To further study the results of the proposed mechanism design process, we reveal the optimal strategies of both sides. % with detailed derivation. 
Based on the first step in the aforementioned mechanism design process, the edge server needs to calculate a game rule $r^*(s)$ maximizing the utility $U_e$. Since the game rule is a function instead of a pure variable, we may utilize the calculus of variations method to derive $r^*(s)$. In detail, 
we first denote the integrand part in parentheses of \eqref{eq:ue} as $H_e=R(r,s)-y(\Theta,s)$, then $r^*(s)$ can be derived by solving the associated Euler-Lagrange equation
\begin{equation}
\frac{\partial H_e}{\partial r} - \frac{d}{dt} \frac{\partial H_e}{\partial r'} = 0,
\end{equation}
under the condition $\frac{\partial^2 H_e}{\partial r^2}<0.$ 
Since $H_e$ is not explicitly related to $r'$, the above equation turns into $\frac{\partial H_e}{\partial r} = 0$, which derives 
\begin{equation}\label{eq:rule}
r^*(s) = r_0 - \frac{A_e \Theta s}{2\rho}.
\end{equation}
 Meanwhile, we can calculate $\frac{\partial^2 H_e}{\partial r^2}=-2\rho$ which is obviously negative since $\rho>0$. Therefore, we can confirm that the above $r^*(s)$ can maximize $U_e$.

Using the similar method, we can derive $s^*$ with the calculated $r^*(s)$ as $$s^* = \frac{\rho(r_0+A_d\Theta)}{A_e \Theta},$$ which maximizes $U_d$  defined in \eqref{eq:ud} under the condition $\frac{\partial^2 H_d}{\partial s^2}<0$. In fact, denoting the integrand of $U_d$ as $H_d$, we can calculate $ \frac{\partial^2 H_d}{\partial s^2} = -\frac{A_e \Theta }{\rho} <0$ because $A_e,\Theta,$ and $\rho$ are all positive. 
%For brevity, we summarize the detailed results in the following two theorems.

\subsection{Truthfulness Analysis}\label{subsec:analysis}
To investigate the effectiveness of the mechanism design for data size information collection, %with the above-derived optimal game rule and responding strategy, 
we theoretically analyze %the incentive-compatibility of the optimal game rule proposed by the edge server.
%Now we can analyze 
the truthfulness of the device in the step of submitting $s$ under the function of the server's designed game rule $r^*(s)$. In particular, we investigate that whether the proposed $r^*(s)$ satisfies the \textit{incentive-compatibility principle}. To be specific, as defined in \cite{algorithmic2007}, a mechanism is incentive-compatible if only exerting actions based on the real preferences can realize the optimal outcome for every player. In our case, this principle compels that only when the device tells the truth about $\Theta$, can its expected utility be maximized, i.e., meeting the incentive-compatibility constraint $U_d(r(s),s(\Theta))\geq U_d(r(\hat{s}),\hat{s}(\hat{\Theta}))$ where $\hat{\Theta}$ represents any possible value of $\Theta$ and $\hat{s}$ is the corresponding report. By this means, the data size information $s$ can be solicited according to the truthful intention of the device behaving maliciously. 

The detailed analysis about the incentive compatibility of the designed game rule $r^*(s)$ is presented in the following theorem.

\begin{theorem}
The server's proposed game rule $r^*(s)$ shown in \eqref{eq:rule} is incentive-compatible.
\end{theorem}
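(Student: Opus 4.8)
The plan is to verify incentive compatibility directly: once the server commits to the rule $r^*(s)$ of \eqref{eq:rule}, I will show that the device's expected utility becomes a strictly concave functional of its report whose unique maximizer is exactly the ``truthful'' response $s^*(\Theta)=\rho(r_0+A_d\Theta)/(A_e\Theta)$ found in Section \ref{subsec:strategy}. First I would substitute $r^*(s)$ into \eqref{eq:ud}, collapsing the device's objective to $U_d=\int_0^T H_d(s)\,dt$ with integrand $H_d(s)=r^*(s)\,s+x(\Theta,s)=r_0 s-\tfrac{A_e\Theta}{2\rho}s^2+A_d\Theta s+B_d$. Since $A_e,\Theta,\rho>0$, the map $s\mapsto H_d(s)$ is a concave (downward-opening) parabola, so $\partial^2 H_d/\partial s^2=-A_e\Theta/\rho<0$ holds globally, not merely at a stationary point; hence $U_d$ is maximized by the pointwise-in-$t$ maximizer of $H_d$, which is the unique root of $\partial H_d/\partial s=0$, namely $s^*(\Theta)$.

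Next I would set up the deviation comparison demanded by the incentive-compatibility constraint $U_d(r(s),s(\Theta))\ge U_d(r(\hat s),\hat s(\hat\Theta))$. A device of true type $\Theta$ that pretends to have type $\hat\Theta$ would submit $\hat s=s^*(\hat\Theta)$ and still be rewarded by the same committed rule $r^*(\cdot)$ evaluated at $\hat s$, while its realized extra profit $x(\Theta,\cdot)$ is governed by the \emph{true} $\Theta$; so its utility is $U_d(r^*(\hat s),\hat s)$ with $H_d$ as above. Strict concavity of $H_d(\cdot)$ then yields $U_d(r^*(\hat s),\hat s)\le U_d(r^*(s^*(\Theta)),s^*(\Theta))$, with equality iff $\hat s=s^*(\Theta)$. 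Finally I would observe that $s^*(\Theta)=\rho r_0/(A_e\Theta)+\rho A_d/A_e$ is strictly monotone in $\Theta$, hence injective, so $\hat s=s^*(\Theta)$ forces $\hat\Theta=\Theta$; therefore truthful revelation is the unique best response, and by the revelation principle invoked in Section \ref{sec:mechanism}, $r^*(s)$ is incentive-compatible.

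The algebra here is routine; the part that needs care is the bookkeeping of \emph{which} $\Theta$ enters \emph{where} when the device contemplates a misreport --- that the reward tracks $r^*(\cdot)$ at the submitted $\hat s$ while the bonus term $x(\Theta,\cdot)$ follows the genuine $\Theta$ --- together with the argument that the first-order stationary point is a \emph{global} rather than merely local maximizer. The global claim is immediate because $H_d$ is quadratic with negative leading coefficient, and the injectivity of $s^*(\cdot)$ is what bridges ``report the utility-maximizing $s$'' and ``reveal the true $\Theta$,'' turning the optimization statement into a genuine incentive-compatibility result.
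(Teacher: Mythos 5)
Your proof is correct and follows essentially the same route as the paper's: under the committed rule $r^*(s)$, the truthful report $s^*(\Theta)$ is the unique maximizer of $U_d$, so any report generated from a fake type $\hat\Theta$ is strictly suboptimal. You are in fact more careful than the paper, whose proof simply asserts both that $\tilde{s}^* \neq s^*$ and that $U_d(s^*) > U_d(\tilde{s}^*)$; your global strict-concavity argument for the quadratic $H_d$ and the injectivity of $\Theta \mapsto s^*(\Theta)$ supply exactly those two missing justifications.
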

\begin{proof}
Assuming that the device pretends to be malicious with a different probability $\tilde{\Theta}$ during the step of submitting its local data size, which is not equal to $\Theta$.  Thereby, the device will send back a newly best strategy $\tilde{s}^*$ based on $\tilde{\Theta}$ once receiving and accepting the game rule $r^*(s)$, where $\tilde{s}^* \neq s^*$. However, as we mentioned earlier, the device derives the best strategy $s^*$ for maximizing its expected utility, so there will exist $ U_d(s^*) > U_d(\tilde{s}^*)$. That is to say, if the device delivers $\tilde{s}^*$ based on $\tilde{\Theta}$, there will be some room left to further increase its expected utility, which contradicts the device's goal of submitting the optimal strategy for maximizing $U_d$. Therefore, a rational device will only calculate the optimal strategy based on the real private information, demonstrating the incentive compatibility of the devised game rule.
\end{proof}

\section{Correlated Equilibrium for Participation Decision}\label{sec:solution}

As mentioned in Section \ref{sec:formulation}, since it is challenging for devices to achieve individual optimum in the participation game, 
%Considering about the challenges of devices in the participation game to achieve individual optimum, 
we consider to address Problem \ref{prob:1} from a global perspective. Specifically, the edge server located in FEL center helps the calculation after collecting truthful size information about local datasets from devices as reported in the above section. In this section, the main concept of correlated equilibrium in the participation game is first illustrated to realize each player's individual rationality, based on which a Global Profit Maximization (GPM) problem is defined to realize global optimum in Section \ref{subsec:game}.  %which will be solved on the edge server. 
Then we conduct rigorous analysis on the computational cost of the basic solution, which is reduced by designing an approximation algorithm in Section \ref{subsec:improve}.

\subsection{Basic Solution}\label{subsec:game}

%Generally, there are two reasons inspiring our solution of utilizing the edge server to facilitate calculating the best participation decisions for all devices in FEL. On the one hand, the server is capable of obtaining necessary global information of all devices for calculation, such as collecting the cost-related parameters during the device registration process and gathering the data size information before a round of FEL begins. On the other hand, the server would be willing to assist FEL participation decision making for devices because this will benefit the whole FEL system in terms of saving resource consumption, which can facilitate the long-term efficiency and sustainability of edge learning. In this subsection, we will elaborate on how the server solves this problem based on game theory.

Formally, we define the strategy space of any player (device) in the participation game as $\mathcal{P}=\{0,1\}$ and its size is $P=2$. Thus, the calculated optimal decision vector $\mathbf{p}$ will come from the space $\mathcal{P}^n$, i.e., all the possible combinations of devices' participation decisions. Recall that any player $d_i$ in the participation game has an objective of obtaining 
%As we mentioned in the previous section, each player aims at achieving 
the optimal profit $V_i(\mathbf{p})$ defined in Definition \ref{def:profit}. To indicate different impacts of strategies on each device's profit,  we rewrite $V_i(\mathbf{p})$ into $V_i(p_i,\mathbf{p}_{-i})$. According to \cite{algorithmic2007}, $V_i(p_i,\mathbf{p}_{-i})$ can be optimized at the \textit{correlated equilibrium} of the participation game, which can be defined as follows.

\begin{definition}[Correlated Equilibrium of the Participation Game]\label{def:correlated}
In the participation game,  a probability distribution over the space $\mathcal{P}^n$, denoted as $G(\mathbf{p})$, is termed as a correlated equilibrium iff $G(\mathbf{p})$ makes the following condition hold for any strategy $p_i,p'_i \in \mathcal{P}$,
\begin{equation}
\sum_{\mathbf{p}_{-i}\in \mathcal{P}^{n-1}} G(p_i,\mathbf{p}_{-i}) \Big( V_i(p_i,\mathbf{p}_{-i}) - V_i(p'_i,\mathbf{p}_{-i}) \Big) \geq 0. \nonumber
\end{equation}
%$G(\mathbf{p})$ is termed as a correlated equilibrium.
\end{definition}

In light of the above definition, we can tell that under the correlated equilibrium $G(\mathbf{p})$, there is no player with the motivation to deviate from the assigned strategy $p_i$ given other devices' strategies $\mathbf{p}_{-i}$. In other words, only by playing the game with the strategy $p_i$ in $\mathbf{p}$ sampled from the correlated equilibrium $G(\mathbf{p})$, can any player $d_i$ optimizes the profit. %, which is also named as \textit{individual rationality}. %Considering that there are only two more natural constraints for $G(\mathbf{p})$ as a probability distribution, i.e., $\sum_{\mathbf{p}\in \mathcal{P}^n} G(\mathbf{p})=1$ and $G(\mathbf{p})\geq 0$, 

In fact, multiple different correlated equilibria might meet the above condition. 
Considering that the edge server in FEL derives the participation decisions based on the global utility, i.e., $\sum_{\mathbf{p}\in \mathcal{P}^{n}} G(\mathbf{p}) \sum_{i=1}^n V_i(\mathbf{p})$, we can calculate the best correlated equilibrium of the participation game via solving the following \textit{Global Profit Maximization (GPM) problem}.

\textbf{GPM Problem:}
\begin{align}
\max:~& \sum_{\mathbf{p}\in \mathcal{P}^{n}} G(\mathbf{p}) \sum_{i=1}^n V_i(\mathbf{p}) \label{eq:objective} \\ 
\mathrm{s.t.}:~& G(\mathbf{p})\geq 0, ~\forall \mathbf{p} \in \mathcal{P}^n, \label{eq:bigzero}\\
& \sum_{\mathbf{p}\in \mathcal{P}^n} G(\mathbf{p})=1, \label{eq:equalone}\\
\sum_{\mathbf{p}_{-i}\in \mathcal{P}^{n-1}} &G(p_i,\mathbf{p}_{-i}) \Big( V_i(p_i,\mathbf{p}_{-i}) - V_i(p'_i,\mathbf{p}_{-i}) \Big) \geq 0, \notag \\ & \forall p_i,p'_i \in \mathcal{P}. \label{eq:correlated}
\end{align}

%For easy reference, we name the above optimization problem as the Global Profit Maximization (\textbf{GPM}) problem 
Clearly, the optimization variable is $G(\mathbf{p})$ in the above GPM problem, with the optimization object  in \eqref{eq:objective} to maximize the overall expected profit for all devices in the participation game.  The first constraint \eqref{eq:bigzero} is a natural requirement for the probability distribution, the second one \eqref{eq:equalone} represents that the sum of all probability distribution over the strategy space $\mathcal{P}^n$ equals to 1, and the last one \eqref{eq:correlated} is the definition of correlated equilibrium for individual profit maximization. %Note that the number of constraints in the above global profit maximization problem is definitely not only three since \eqref{eq:bigzero} includes $P^n$ inequalities and \eqref{eq:correlated} implies $P^2n$ constraints.

%Considering that the server can obtain all necessary parameters from devices in advance, 
%After the server calculates the best correlated equilibrium $G^*(\mathbf{p})$ achieving both the individual and global utility optimization, devices can sample their participation decisions based on the optimal solution of probability distribution $G^*(\mathbf{p})$. % returned by the server before the beginning of this round of FEL to save resource consumption while obtain the trained ML model with expected performance. 

 \subsection{Improvement of Computational Cost}\label{subsec:improve}

It can be seen that the above GPM problem is actually a linear programming problem in terms of $G(\mathbf{p})$ and might be addressed with several efficient methods, e.g., simplex and interior-point algorithms. Nevertheless, the overall computational cost of the existing solutions is proportionally related to the number of constraints and variables, making the direct method of applying existing algorithms to solve the GPM problem inefficient since  the number of constraints is $P^n + P^2n +1$ and the number of variables ($G(\mathbf{p})$) is $P^n$. %  $\mathbf{p} \in \mathcal{P}^n$, which makes the complexity of directly using existing efficient solutions to solve the GPM problem not efficient anymore. Formally, we can demonstrate this phenomenon by the following theorem.
For clarity, we analyze the computational cost of direct applying existing linear programming algorithms on the GPM problem in the below theorem.

\begin{theorem}\label{thrm:direct}
The computational complexity increases exponentially with the number of devices $n$ in the case of directly using existing linear programming solutions to solve the GPM problem.
\end{theorem}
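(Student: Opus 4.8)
The plan is to establish the exponential blow-up by a direct counting argument on the size of the linear program, combined with the known complexity bounds for linear programming algorithms. The key observation is already half-stated in the paragraph preceding the theorem: the GPM problem has $P^n$ variables (one entry of $G(\mathbf{p})$ for each $\mathbf{p}\in\mathcal{P}^n$) and $P^n + P^2 n + 1$ constraints. Since $P=2$ is fixed, both of these quantities are dominated by the $2^n$ term, so the \emph{size} of the linear program — measured either by the number of variables $N_{\mathrm{var}}$ or the number of constraints $N_{\mathrm{con}}$ — is $\Theta(2^n)$, i.e. exponential in $n$. I would state this as the first formal step.

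Next I would invoke a standard worst-case complexity bound for solving a linear program as a function of its dimensions. For instance, an interior-point method solves an LP with $N_{\mathrm{var}}$ variables and $N_{\mathrm{con}}$ constraints in a number of arithmetic operations polynomial in $N_{\mathrm{var}}$, $N_{\mathrm{con}}$, and the bit-length $L$ of the input — e.g. on the order of $(N_{\mathrm{var}} + N_{\mathrm{con}})^{1.5} N_{\mathrm{var}} L$ or a similar polynomial (the exact exponent is immaterial). The point is only that the running time is bounded \emph{below} by a quantity that grows at least linearly in $N_{\mathrm{var}}$: any algorithm must at minimum read and write the $\Theta(2^n)$ decision variables and the $\Theta(2^n)$ constraint rows, so the cost is $\Omega(2^n)$ regardless of which method (simplex, ellipsoid, interior-point) is chosen. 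I would substitute $N_{\mathrm{var}}=P^n=2^n$ and $N_{\mathrm{con}}=P^n+P^2n+1=2^n+4n+1$ into the bound and simplify, concluding that the overall computational complexity is exponential in $n$.

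Concretely, the steps in order are: (1) count the variables of the GPM problem as $P^n$; (2) count the constraints as $P^n+P^2n+1$ by tallying \eqref{eq:bigzero} ($P^n$ nonnegativity constraints), \eqref{eq:equalone} (one normalization constraint), and \eqref{eq:correlated} ($P^2 n$ correlated-equilibrium inequalities, one for each ordered pair $(p_i,p_i')\in\mathcal{P}^2$ and each of the $n$ players); (3) set $P=2$ and observe both counts are $\Theta(2^n)$; (4) plug these into the polynomial-time LP complexity bound and note that, since the bound is monotincreasing in the problem dimensions and is at least linear in $N_{\mathrm{var}}$, the resulting expression is $\Omega(2^n)$ — hence exponential in $n$.

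The main obstacle is not mathematical depth but precision of statement: the theorem speaks loosely of ``computational complexity,'' so the proof must pin down that it means the running time of a generic LP solver applied to the GPM formulation, and must be careful that the lower bound $\Omega(2^n)$ genuinely follows — it does, because merely instantiating the LP (writing down its $2^n$ variables and $2^n$ constraint rows) already costs exponential time, independently of the solver's internal efficiency. A secondary subtlety is that one should note this is a statement about the \emph{naive} formulation; it does not preclude a cleverer reformulation with polynomially many effective variables, which is exactly what the subsequent approximation algorithm in Section~\ref{subsec:improve} is designed to exploit. I would close the proof by making that scope explicit so the theorem is not over-read.
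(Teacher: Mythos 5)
Your proposal is correct and follows essentially the same route as the paper: count the $P^n=2^n$ variables and $P^n+P^2n+1=2^n+4n+1$ constraints, then observe that any polynomial-time LP solver applied to a problem of this size runs in time $\mathcal{O}(2^n)$. Your added remarks — that the $\Omega(2^n)$ lower bound follows merely from instantiating the LP, and that the claim concerns the naive formulation rather than all conceivable reformulations — make the argument tighter than the paper's, but the core counting step is identical.
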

\begin{proof}
Considering that the number of optimization variables is $P^n$ and that of constraints is $P^n + P^2n +1$, we can derive the valued results in the GPM problem as $2^n$ and $2^n+4n+1$ due to the strategy space has a size of $P=2$ in our scenario. %, which can be captured by $\mathcal{O}(2^n)$.
Thus, even adopting efficient solutions that can solve the linear programming problem in polynomial time, when we directly apply them on the GPM problem, the computational complexity turns out to be $\mathcal{O}(2^n)$. %exponential with respect to the increasing number of devices.
\end{proof}

In practical, the number of devices $n$ can be large and extend substantially sometimes, making the straight application of existing polynomial-time algorithms without any change on our problem inefficient. To solve this challenge, we enhance the above basic solution to decrease the computational cost to an acceptable level in the following. 

Specifically, to decrease the computational complexity of solving the GPM problem, the key step is to prevent the changing trend that the numbers of variables and constraints increase exponentially with the number of devices. In this case, a rough idea to enhance the basic solution is controlling the quantities of variables and constraints varying with $n$ in a polynomial manner, thus approaching an overall polynomial complexity. Generally, we divide the whole set of edge devices into several subsets, and then there exists a smaller-scale participation decision problem, i.e., sub-GPM (SGPM) problem, for devices in each subset. 
% come to form a smaller-scale participation game and thus a sub-GPM (SGPM) problem can be formulated to achieve the corresponding overall profit maximization on the sub-set.  
%Here we consider to utilize the submitted size of local dataset from each device, i.e., $s_i$, as a metric to divide the original device set. Assume that the value range of $s_i$, i.e., $[\min \{s_i\},\max \{s_i\}]$, can be divided into $\xi$ equivalent intervals, then we classify devices with submitted $s_i$ locating in the same interval into a small set, and the number of devices with $s_i$ belonging to the $j$-th interval can be denoted as $n_j,~j \in \{1,2,\cdots,\xi\}$. With the notation $\bar{n} = \max \{n_j\}$, we can have $\bar{n} \ll n$. %By this means, we can have the SGPM problem\footnote{To avoid redundancy, we omit the specific expression and description of the SGPM problem.} with a similar expression as the original GPM problem where the main difference is that the decision vector $\mathbf{p}$ in the SGPM problem has a smaller size of $2^{n_s}$. After solving the SGPM problems separately on all small sets, we can have the outputs $G_s(\cdot)$'s and combine them into $(G_1(\cdot),G_2(\cdot),\cdots, G_{\xi}(\cdot))$ as the final result.
Suppose that $\xi$ small device subsets $\{\mathbf{d}_1,\cdots,\mathbf{d}_{\xi}\}$ are formed in light of the communication order of devices reporting $s_i$ to the server, with the size of each subset being upper-bounded by $\bar{n} \ll n$. For simplicity, we denote the quantity of devices in subset $\mathbf{d}_j$ as $n_j,~j \in \{1,2,\cdots,\xi\}$. 
And now, we can have the SGPM problem as follows,%\footnote{To avoid redundancy, we omit the specific expression and description of the SGPM problem.} with a similar expression as the original GPM problem where the main difference is that the decision vector $\mathbf{p}$ in the SGPM problem has a smaller size as $2^{n_j}$. 

\textbf{SGPM Problem: }
\begin{align}\label{eq:sgpm}
\max:~& \sum_{\mathbf{p}_s\in \mathcal{P}^{n_s}} G(\mathbf{p}_s) \sum_{i=1}^{n_s} V_{s,i}(\mathbf{p}_s)  \notag \\ 
\mathrm{s.t.}:~& G(\mathbf{p}_s)\geq 0, ~\forall \mathbf{p}_s \in \mathcal{P}^{n_s}, \notag \\
& \sum_{\mathbf{p}_s\in \mathcal{P}^n_s} G(\mathbf{p}_s)=1, \notag \\
\sum_{\mathbf{p}_{s,-i}\in \mathcal{P}^{n_s-1}} &G(p_{s,i},\mathbf{p}_{s,-i}) \Big( V_{s,i}(p_{s,i},\mathbf{p}_{s,-i}) \notag \\ & - V_{s,i}(p'_{s,i},\mathbf{p}_{s,-i}) \Big) \geq 0,  \forall p_{s,i},p'_{s,i} \in \mathcal{P}^{n_s}. \notag
\end{align}
According to the solutions of all SGPM problems in device subsets, i.e., $G_j(\cdot)$'s, an approximate answer of the GPM problem can be derived. %$(G_1(\cdot),G_2(\cdot),\cdots, G_{\xi}(\cdot))$.

Similarly, we can analyze the computational complexity of the enhanced solution.
\begin{theorem}
If the number of subsets $\xi$ polynomially increases with the number of devices $n$, the GPM problem can be resolved with the enhanced solution in $\mathcal{O}(n)$ time complexity. % to solve the GPM problem, the time complexity will be polynomial with respect to $n$.
\end{theorem}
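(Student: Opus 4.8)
The plan is to bound the total work as the sum over all $\xi$ device subsets of the cost of solving one SGPM problem, and then show this sum is $\mathcal{O}(n)$ under the stated hypothesis. First I would observe that, by the same reasoning as in Theorem~\ref{thrm:direct}, a single SGPM problem for a subset $\mathbf{d}_j$ of size $n_j$ has $2^{n_j}$ variables and $2^{n_j}+4n_j+1$ constraints, so a polynomial-time linear programming solver resolves it in $\mathcal{O}(2^{n_j})$ time. Since every subset obeys $n_j \le \bar{n}$ with $\bar{n}$ a fixed constant independent of $n$, each such cost is bounded by the constant $\mathcal{O}(2^{\bar{n}})$. The remaining step is to aggregate the per-subset solutions $G_j(\cdot)$ into an approximate answer for the GPM problem; I would argue that this recombination is linear (or at worst polynomial) in $n$, so it does not dominate.

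Next I would add up the costs: the total running time is $\sum_{j=1}^{\xi}\mathcal{O}(2^{n_j}) \le \xi \cdot \mathcal{O}(2^{\bar{n}}) = \mathcal{O}(\xi)$, because $2^{\bar n}$ is a constant. Invoking the hypothesis that $\xi$ grows polynomially in $n$ — and, more precisely, that the subsets partition $\mathbf{d}$ so $\sum_j n_j = n$ with each $n_j\ge 1$, forcing $\xi \le n$ — I get $\mathcal{O}(\xi) = \mathcal{O}(n)$. Folding in the $\mathcal{O}(\mathrm{poly}(n))$ recombination cost leaves the bound at $\mathcal{O}(n)$ provided that recombination is itself linear; if the intended recombination step is genuinely only linear-time, the final complexity is exactly $\mathcal{O}(n)$ as claimed.

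The main obstacle I anticipate is pinning down the recombination procedure precisely enough to certify its cost and to justify calling the result an ``approximate'' solution of the GPM problem: the excerpt states that ``an approximate answer of the GPM problem can be derived'' from the $G_j(\cdot)$'s but does not specify how (e.g., a product distribution $G(\mathbf{p}) = \prod_j G_j(\mathbf{p}_{\mathbf{d}_j})$, which would be the natural choice and is indeed computable in time linear in the number of subsets). I would therefore make the product-distribution construction explicit, note that sampling from it or evaluating the induced global objective costs $\mathcal{O}(n)$, and confirm that this does not change the dominant term. A secondary subtlety is the phrase ``if the number of subsets $\xi$ polynomially increases with $n$'': since a partition with bounded block size already forces $\xi = \Theta(n)$, the hypothesis is automatically satisfied, and I would remark that the theorem's conclusion $\mathcal{O}(n)$ is really the special (and tightest) case of the more general statement that $\xi = \mathrm{poly}(n)$ yields $\mathrm{poly}(n)$ total cost.
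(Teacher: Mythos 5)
Your proposal is correct and follows essentially the same route as the paper's own proof: bound each SGPM at $\mathcal{O}(2^{n_j})\le\mathcal{O}(2^{\bar n})$, a constant, and multiply by the number of subproblems $\xi$. In fact your argument is slightly more careful than the paper's, which stops at ``$\xi\cdot\mathcal{O}(2^{\bar n})$ is polynomial in $n$''; your observation that the partition with bounded block sizes forces $\xi\le n$ is exactly what is needed to justify the stated $\mathcal{O}(n)$ bound, and your point that the recombination of the $G_j(\cdot)$'s must itself be certified as linear-time (e.g., via a product distribution) is a legitimate gap in the paper's proof that your write-up would close.
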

\begin{proof}
After splitting, the size of decision vector in any SGPM problem is $2^{n_j}$, and thus the number of variables and that of constraints become $\mathcal{O}(2^{n_j})$, leading to the computational cost of solving the SGPM using existing efficient linear programming algorithms $\mathcal{O}(2^{n_j})$ as well. While considering that $n_j$ has an upper bound of $\bar{n} \ll n$, we can have the complexity of solving each SGPM problem as $\mathcal{O}(2^{\bar{n}})$ which is unchanged with $n$. Thus, if $\xi$ changes with $n$ polynomially, we can derive that the overall complexity of solving the GPM problem is $\xi\cdot \mathcal{O}(2^{\bar{n}})$, which is clearly polynomial in $n$.
\end{proof}

\subsection{Overview of the Decision Making Process}\label{subsec:process}
As shown in Fig. \ref{fig:process}, we can illustrate the specific working process for participation decision making in FEL using our proposed solutions. To begin with, the edge server registers necessary equipment information during the registration step, e.g., computation performance and communication condition parameters. When an FEL task arrives, the device submits the size information of its locally collected data, i.e., $s_i$, following the well-designed game rule as proposed in Section \ref{sec:mechanism} (Step \circled{1}); after the server receives all the size information from devices to solve the GPM problem, the optimal decision probability distribution $G^*(\mathbf{p})$ can be calculated (Step \circled{2}) and sent back to the device (Step \circled{3}); based on the received $G^*(\mathbf{p})$, $d_i$ can obtain the optimal $p^*_i$ (Step \circled{4}) which might also be updated to the server (Step \circled{5}) for better arranging the next FEL procedures. If a round of FEL task finishes, the edge server will share the final ML model to all connected devices for providing better services to users.

\begin{figure}[h]
\centering
\includegraphics[width=0.45\textwidth]{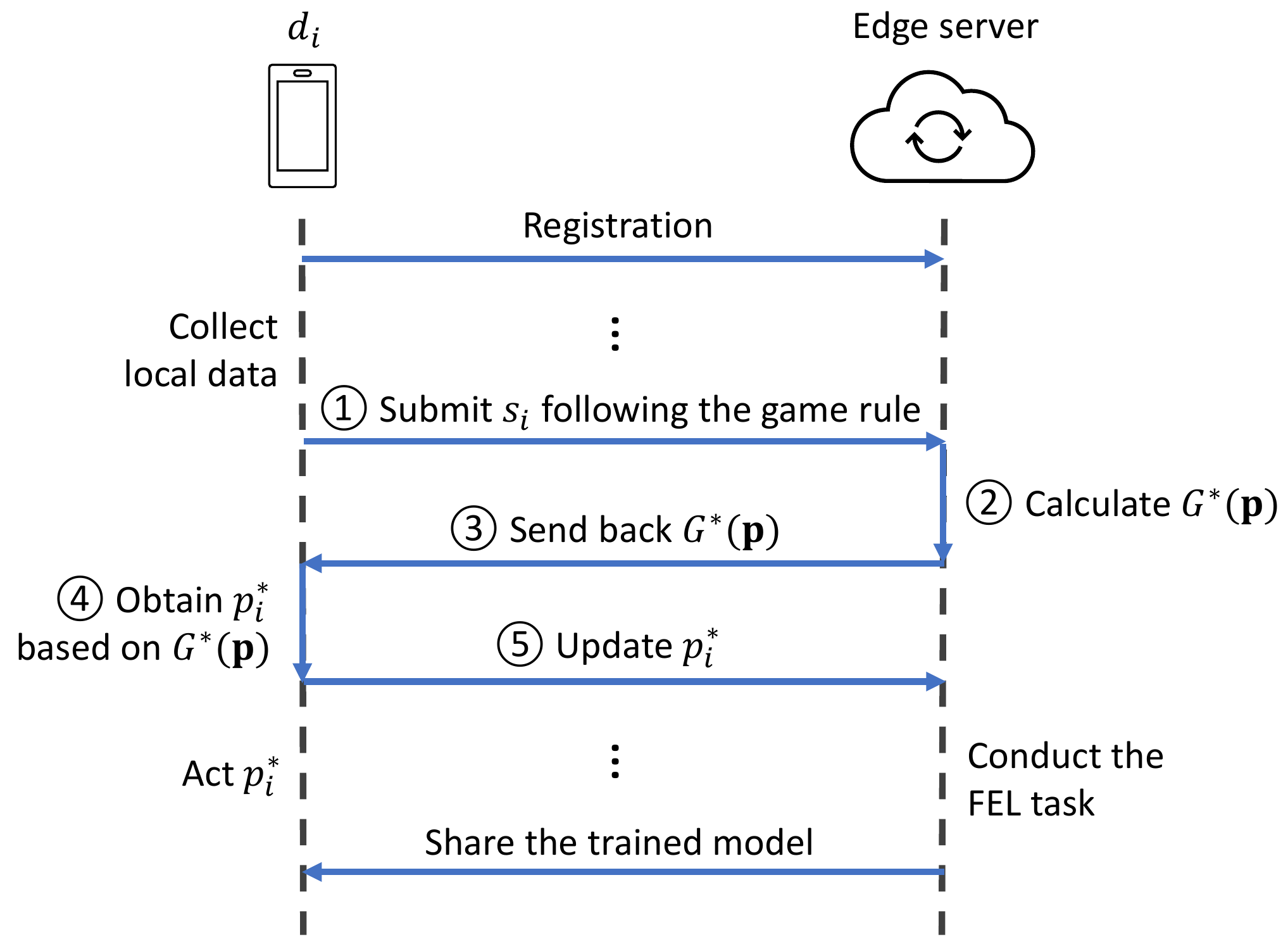}
\caption{Interaction process of our proposed solution.}
\label{fig:process}
\end{figure}

\section{Experimental Evaluation}\label{sec:experiment}
In this section, we conduct experiments to validate the effectiveness and efficiency of our proposed participation decision making scheme with the function of mechanism design for eliciting truthful local data sizes from devices. 
%All experiments are implemented on a laptop with 2.7 GHz Dual-Core Intel Core i5 processor and 8 GB memory, using Python 3.6 for ML related steps and Matlab R2020a for others. 
All experiments are implemented on a desktop with 3.59 GHz AMD six-core processor and 16 GB memory running Windows 10 OS. For ML related experiments, Python 3.6 is utilized for implementation and Matlab R2020a is used for others. Note that all the experimental results reported in this section are averaged from 30 times of repeated experiments for statistical confidence.

%It is worth noting that even though the mechanism design based truthful data size solicitation in Section \ref{sec:mechanism} is presented after introducing the participation decision making scheme in Section \ref{sec:solution}, the actual execution order is reverse since only the trustworthy data size information $s_i$ submitted by devices can lead to the accurate calculation of decisions $p_i$. Nevertheless, for consistency consideration, we report the experimental evaluation results following the same order of the previous two sections.

\subsection{Truthful Data Size Information Solicitation}\label{subsec:exp_truthful}

We simulate the mechanism design process for truthful data size information collection as presented in Section \ref{sec:mechanism}. To conduct the simulation experiments, we set the default values of parameters related to the device as $\Theta = 0.5$ and $A_d=B_d=1$, while those related to  the server as $A_e=B_e=1$, $r_0=50$, $s_0=500$, $\sigma=10^{5}$, and $\rho=10$, unless they are specified otherwise.

To investigate the impacts of the device's private information $\Theta$ on the maximized utility of the device itself and that of the server, we change $\Theta$ from 0.1 to 1 and the results are reported in Fig. \ref{fig:utility_theta}. It is obvious that with an increasing $\Theta$, both the device and the server obtain decreasing values of their maximized utilities. This is definitely reasonable and achieves our expectation. For a device with a higher probability of being malicious to submit the data size information, the power of the designed game rule enforces a lower expected utility for it; at the same time, the profit of the server will also reduce. It is worth noting that the difference between these two curves is because the derived $r^*(s)$ is linear to $\Theta$ while $s^*$ is not, making the optimized utilities follow the similar trends. 
Then we change the device's scalar parameters $A_d$ and $B_d$ from 0 and 1 as well to study their impacts on the maximized utilities of both sides. From Fig. \ref{fig:utility_Ad_Bd}, one can figure out that the impact of $A_d$ is obviously larger than that of $B_d$ even both are in the same range, where the maximized $U_d$ is increasing with $A_d$ while the maximized $U_e$ decreases with $A_d$. 

\begin{figure}
\subfigure{
\includegraphics[width=0.23\textwidth]{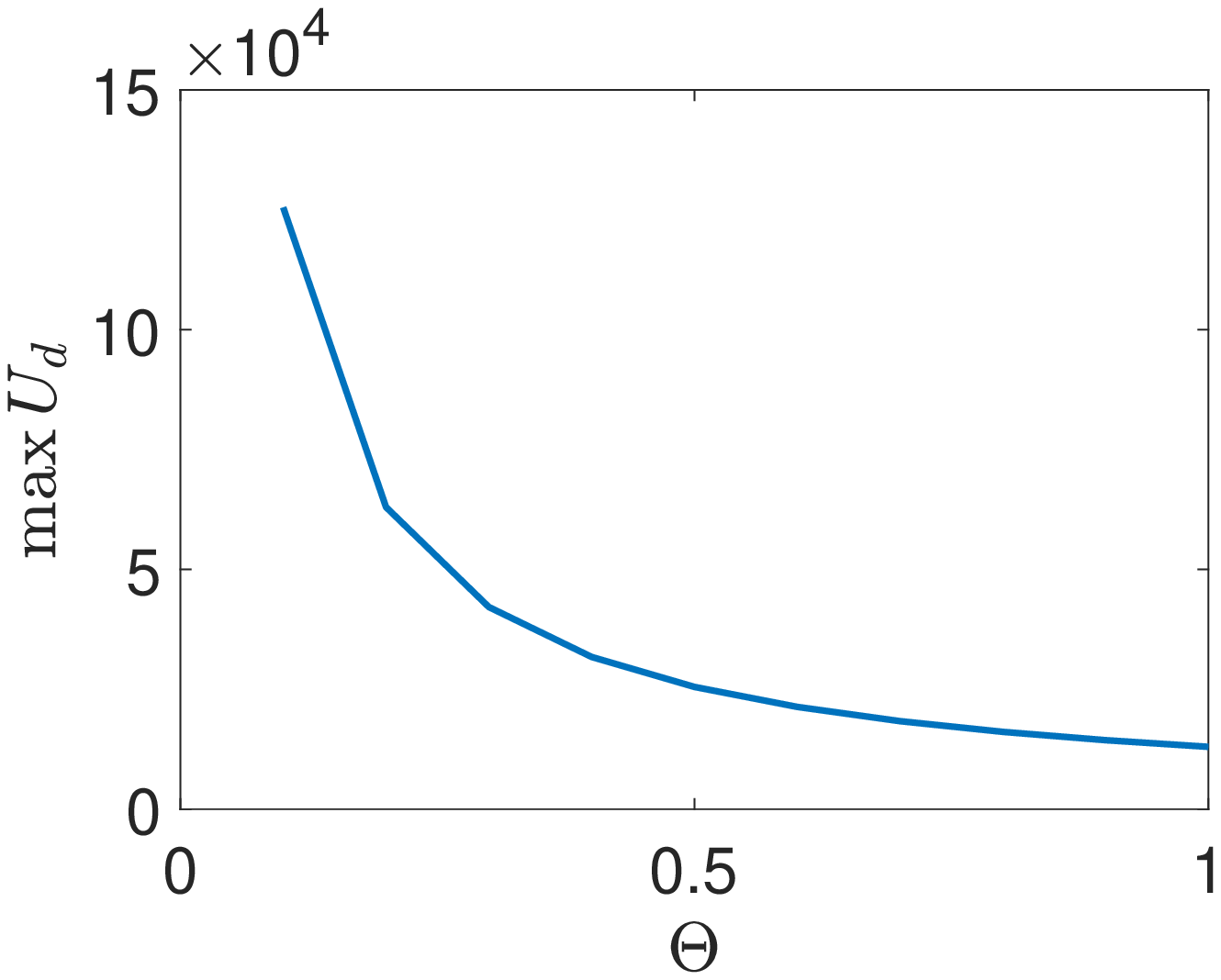}}
\subfigure{
\includegraphics[width=0.23\textwidth]{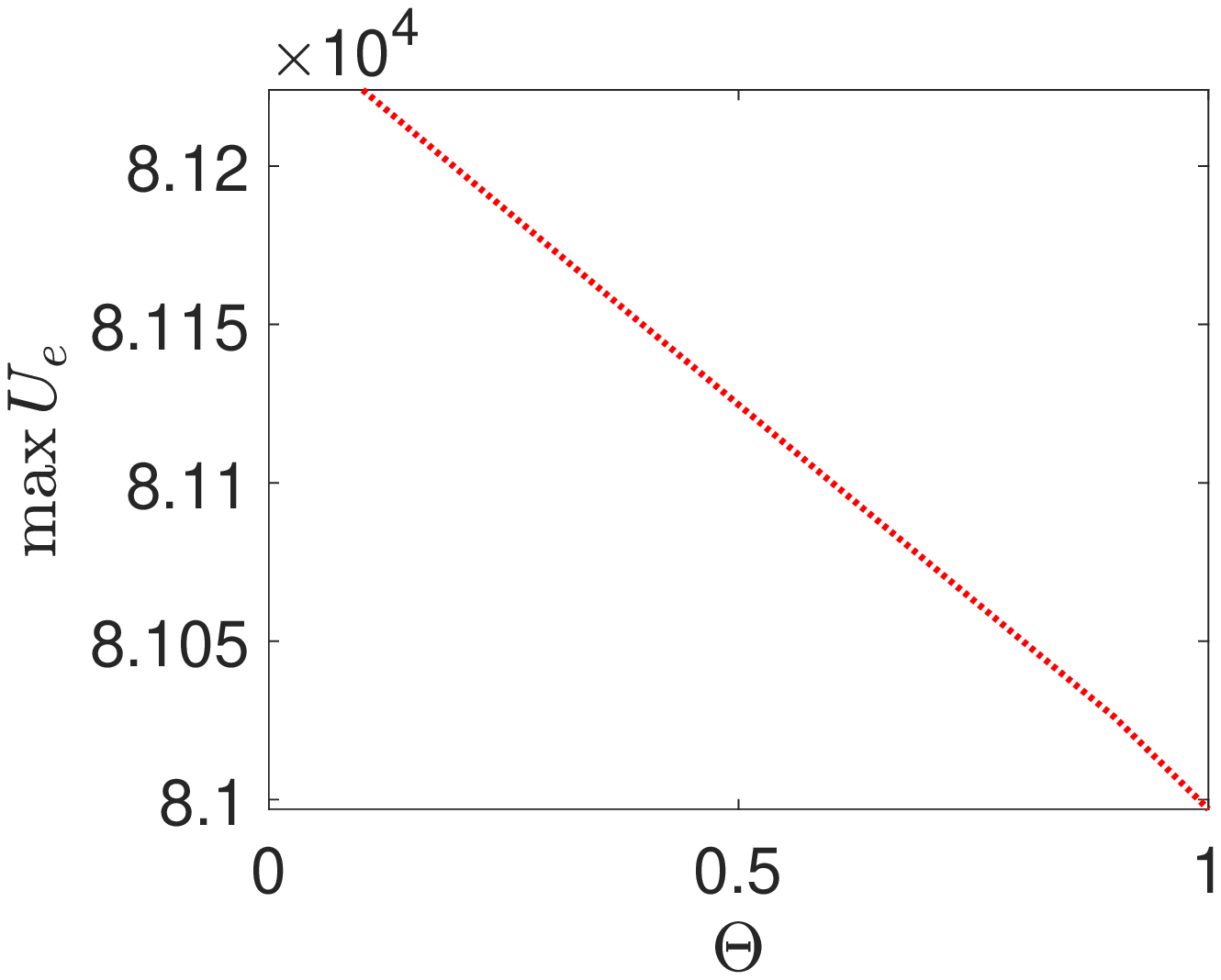}}
\caption{Maximized utility changing with $\Theta$.}
\label{fig:utility_theta}
\end{figure}

\begin{figure}
\subfigure{
\includegraphics[width=0.22\textwidth]{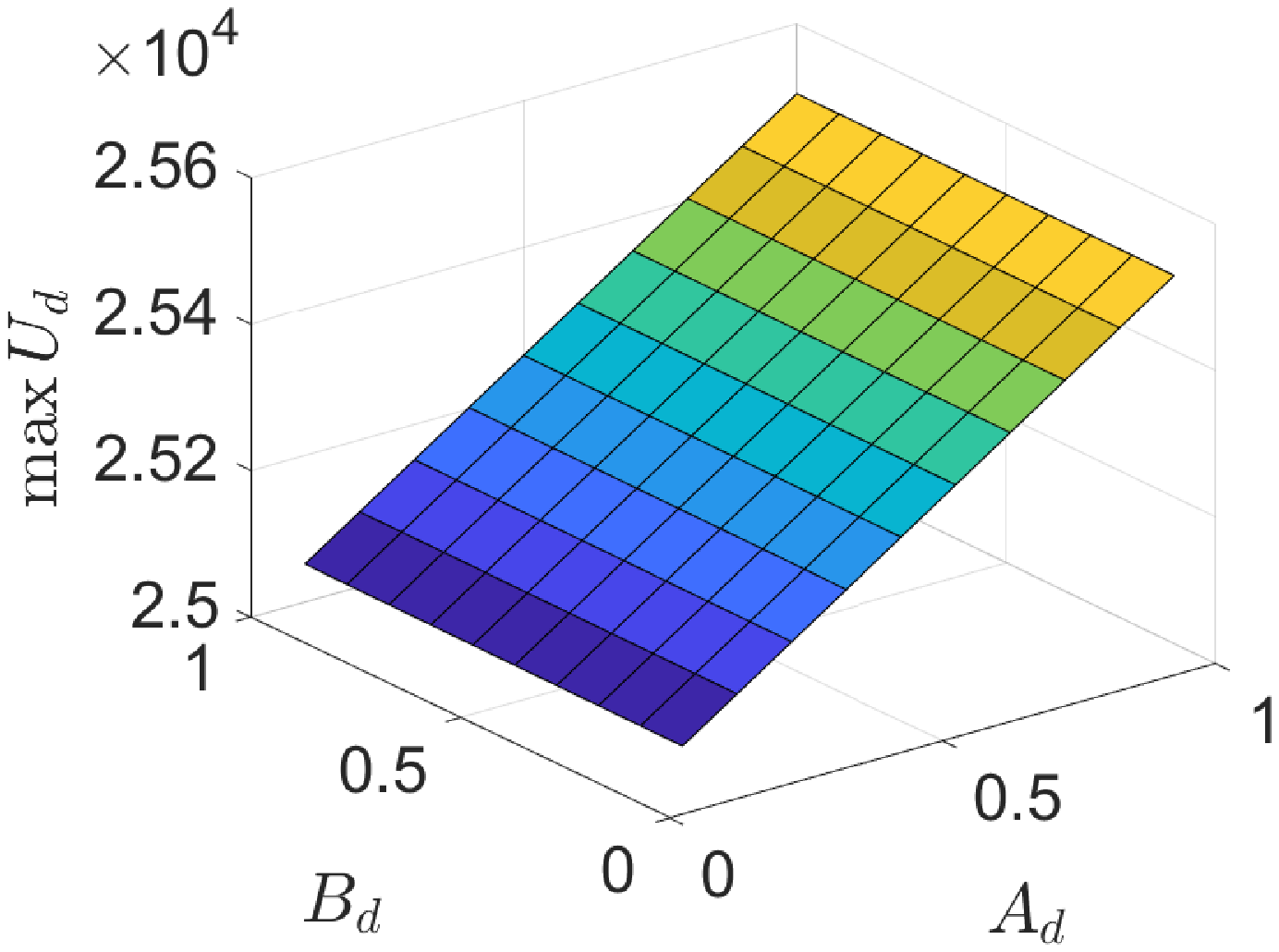}}
\subfigure{
\includegraphics[width=0.22\textwidth]{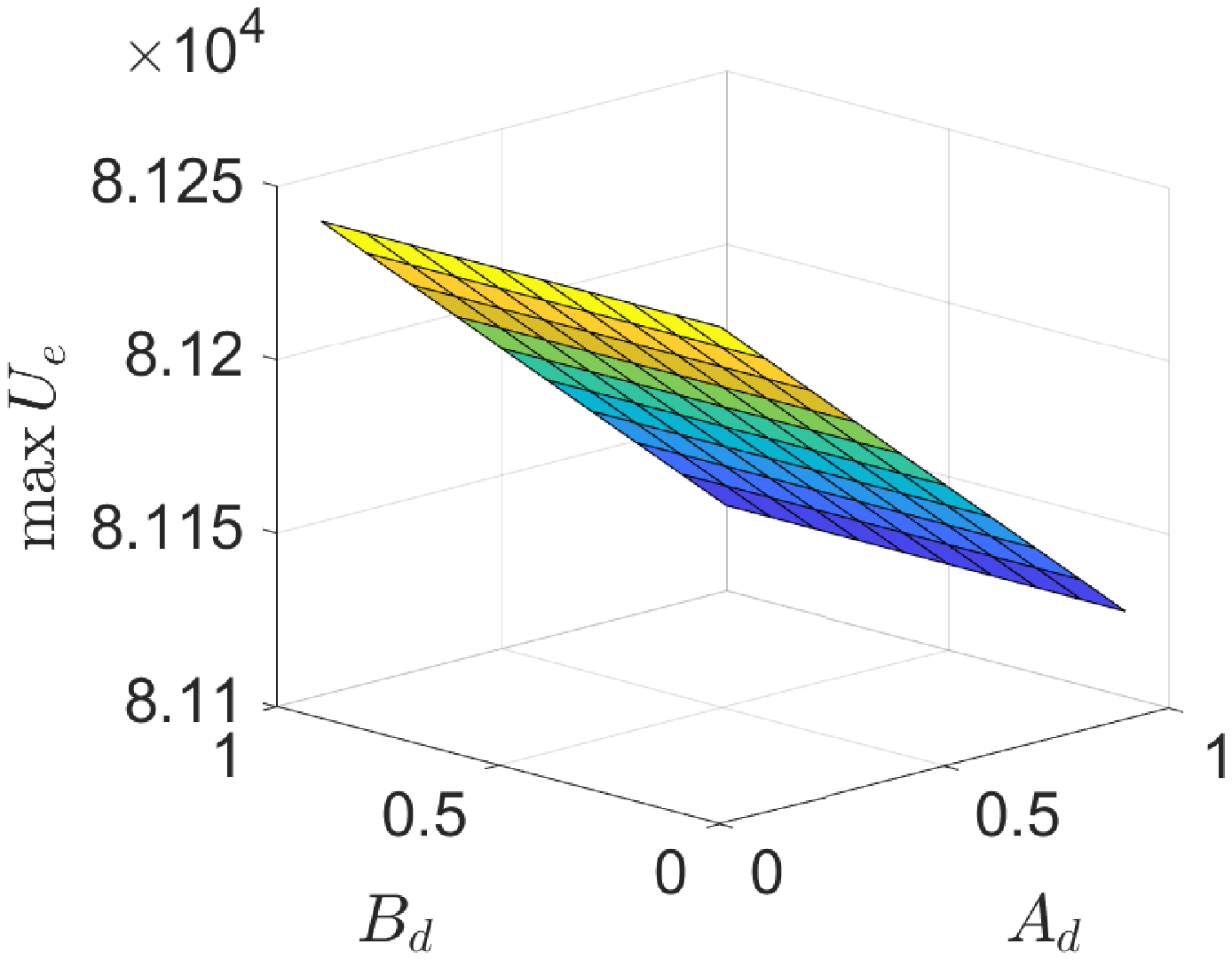}}
\caption{Maximized utility changing with $A_d$ and $B_d$.}
\label{fig:utility_Ad_Bd}
\end{figure}

Next, we evaluate how the server's related parameters affect their utilities. To begin with, we change $A_e$ and $B_e$ in the same range with the same interval as that of $A_d$ and $B_d$, which brings the utility results as reported in Fig. \ref{fig:utility_Ae_Be}. For the maximized utility of the device, the increase of $A_e$ makes it decrease in an inverse proportional fashion while the change of $B_e$ brings no difference to it. This is because the definition of $U_d$ is not related to $B_e$ but indirectly affected by $A_e$ due to the expressions of $r^*(s)$ and $s^*$. 
For the maximized utility of the server, $A_e$ brings no impact while the increasing $B_e$ decreases the optimal value of $U_e$, where the potential reason might be that the production of $s$ and $r$ in $y(\Theta,s)$ makes the activity of $A_e$ offsetting but $B_e$ still functions negatively for $U_e$.
%which is also intelligible since 

\begin{figure}
\subfigure{
\includegraphics[width=0.22\textwidth]{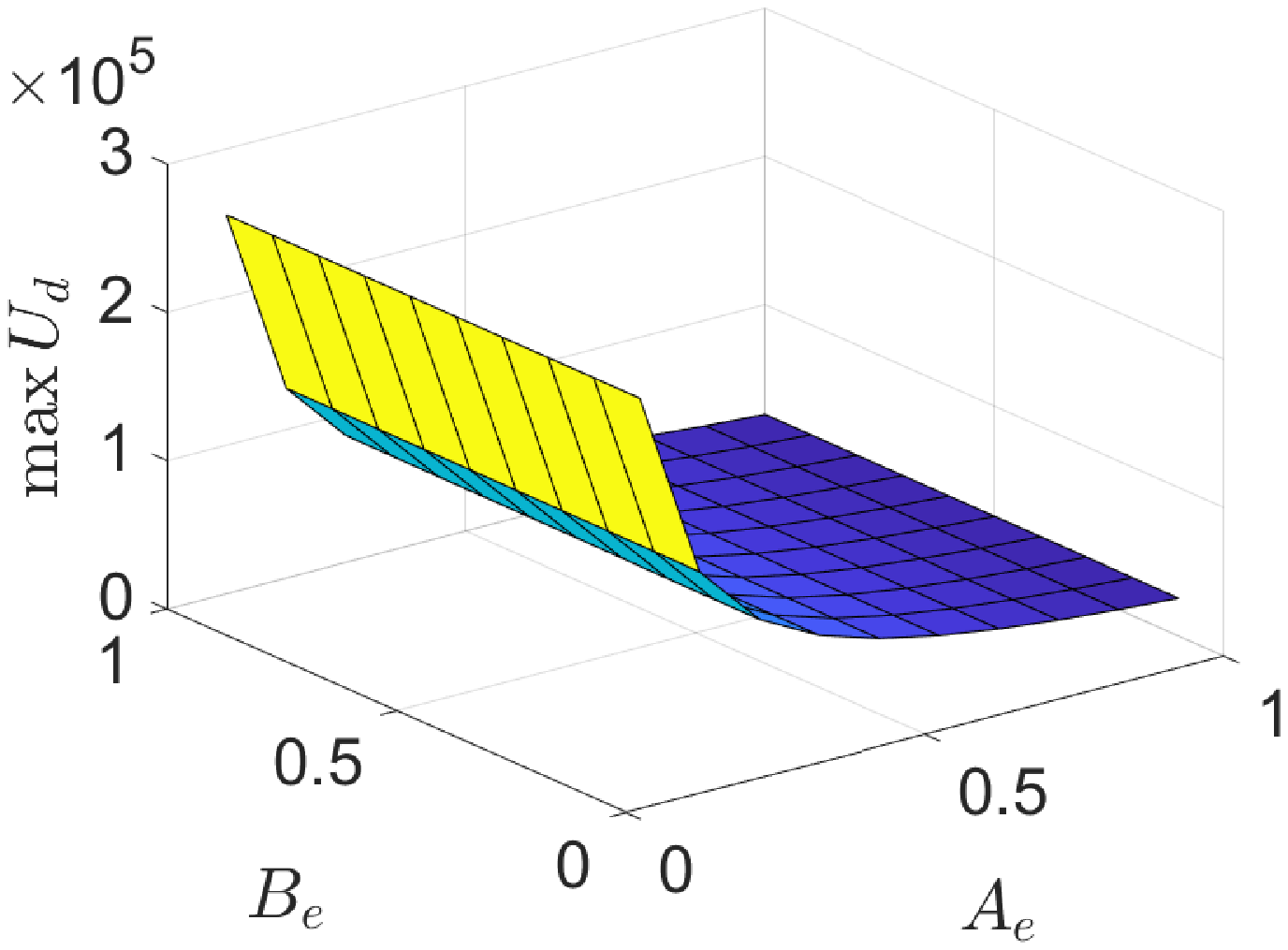}}
\subfigure{
\includegraphics[width=0.22\textwidth]{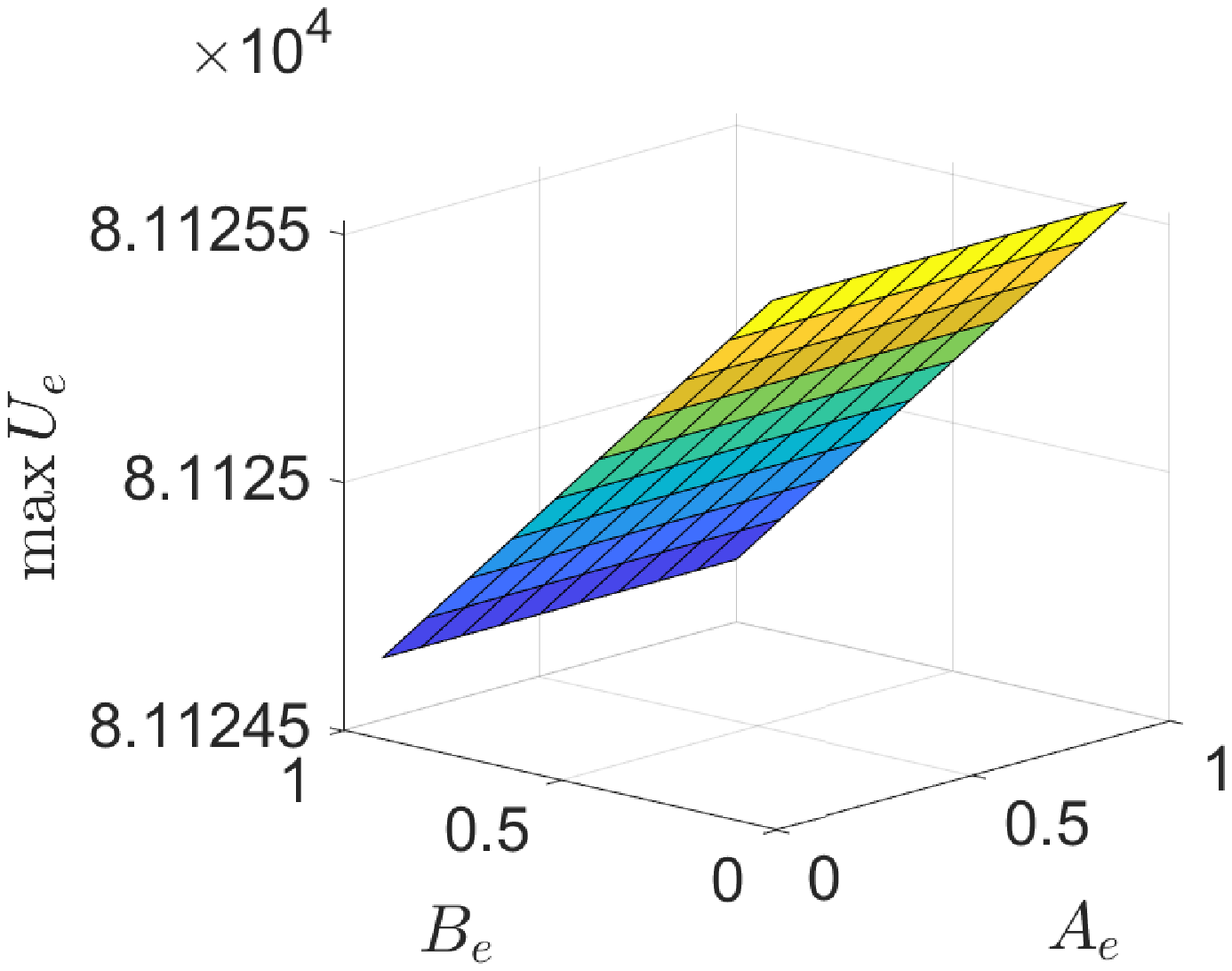}}
\caption{Maximized utility changing with $A_e$ and $B_e$.}
\label{fig:utility_Ae_Be}
\end{figure}

Then the impacts of $\sigma$ and $\rho$ are reported in Fig. \ref{fig:utility_sigma_rho} while the experimental results of changing $s_0$ and $r_0$ are demonstrated in Fig. \ref{fig:utility_s0_r0}. From Fig. \ref{fig:utility_sigma_rho}, we can see, on one hand, $\sigma$ does not affect the value of maximized $U_d$ while $\rho$ presents negligible influence on the value of maximized $U_e$; on the other hand, the maximized $U_d$ increases with $\rho$ and the maximized $U_e$ arises with $\sigma$. 
From Fig. \ref{fig:utility_s0_r0}, it is clear that the change of $s_0$ has no impact on both the maximized $U_d$ and $U_e$, while the increase of $r_0$ makes $U_d$ increase but $U_e$ decrease. 
All these phenomena are intelligible according to the expressions of $U_d$ and $U_e$, as well as $r^*(s)$ and $s^*$, mentioned in Section \ref{sec:mechanism}.

\begin{figure}
\subfigure{
\includegraphics[width=0.22\textwidth]{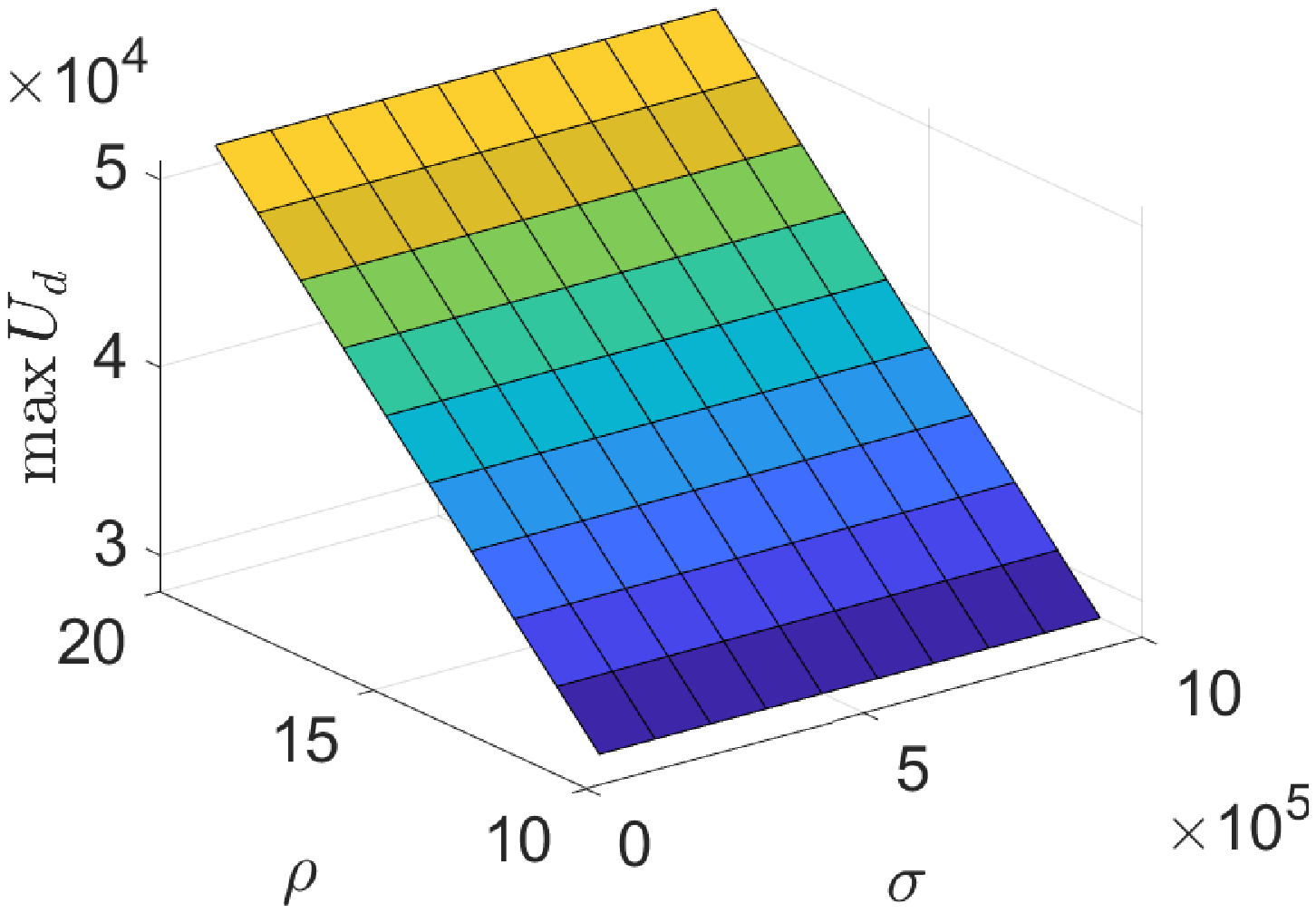}}
\subfigure{
\includegraphics[width=0.22\textwidth]{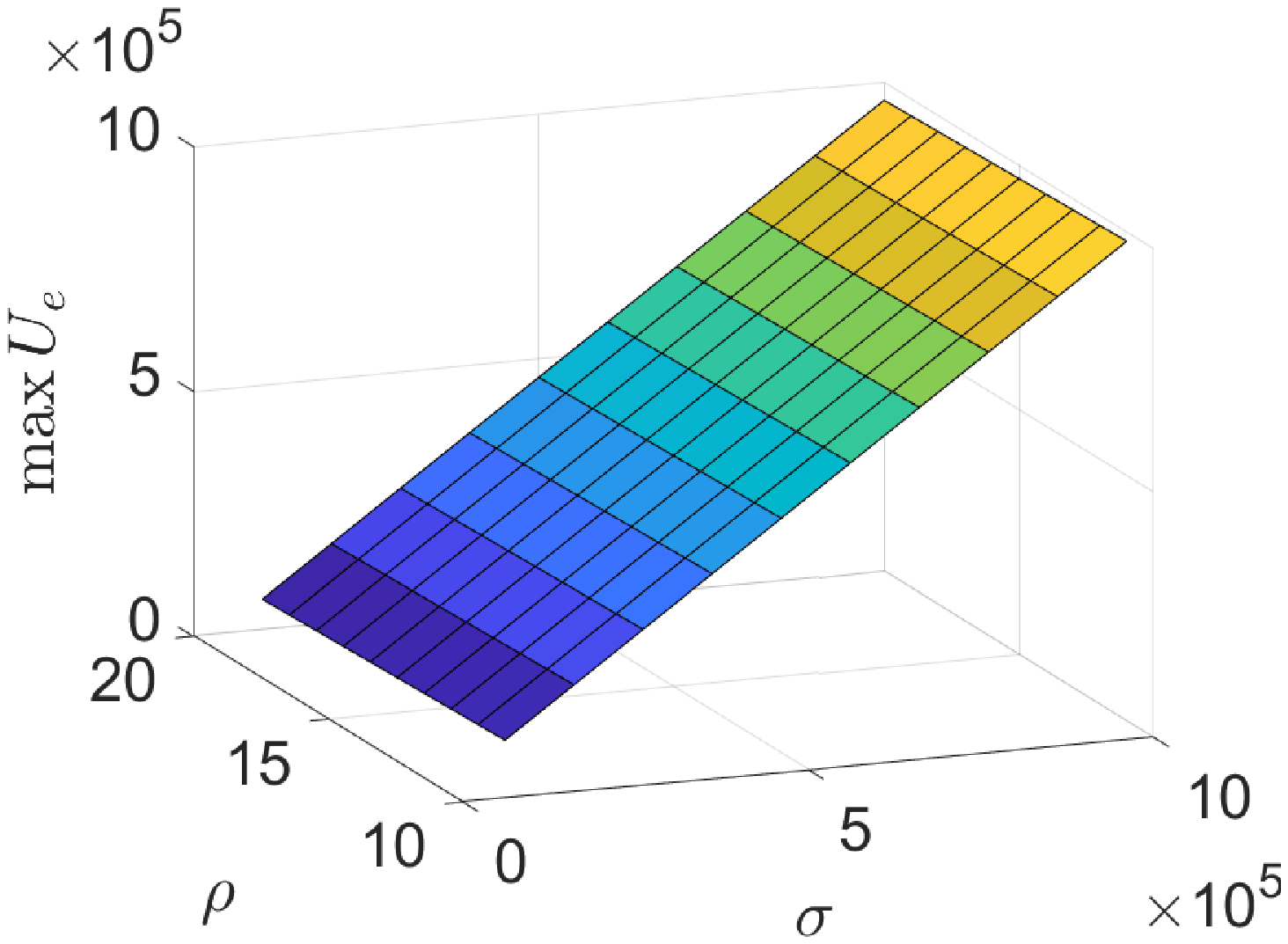}}
\caption{Maximized utility changing with $\sigma$ and $\rho$.}
\label{fig:utility_sigma_rho}
\end{figure}

\begin{figure}
\subfigure{
\includegraphics[width=0.22\textwidth]{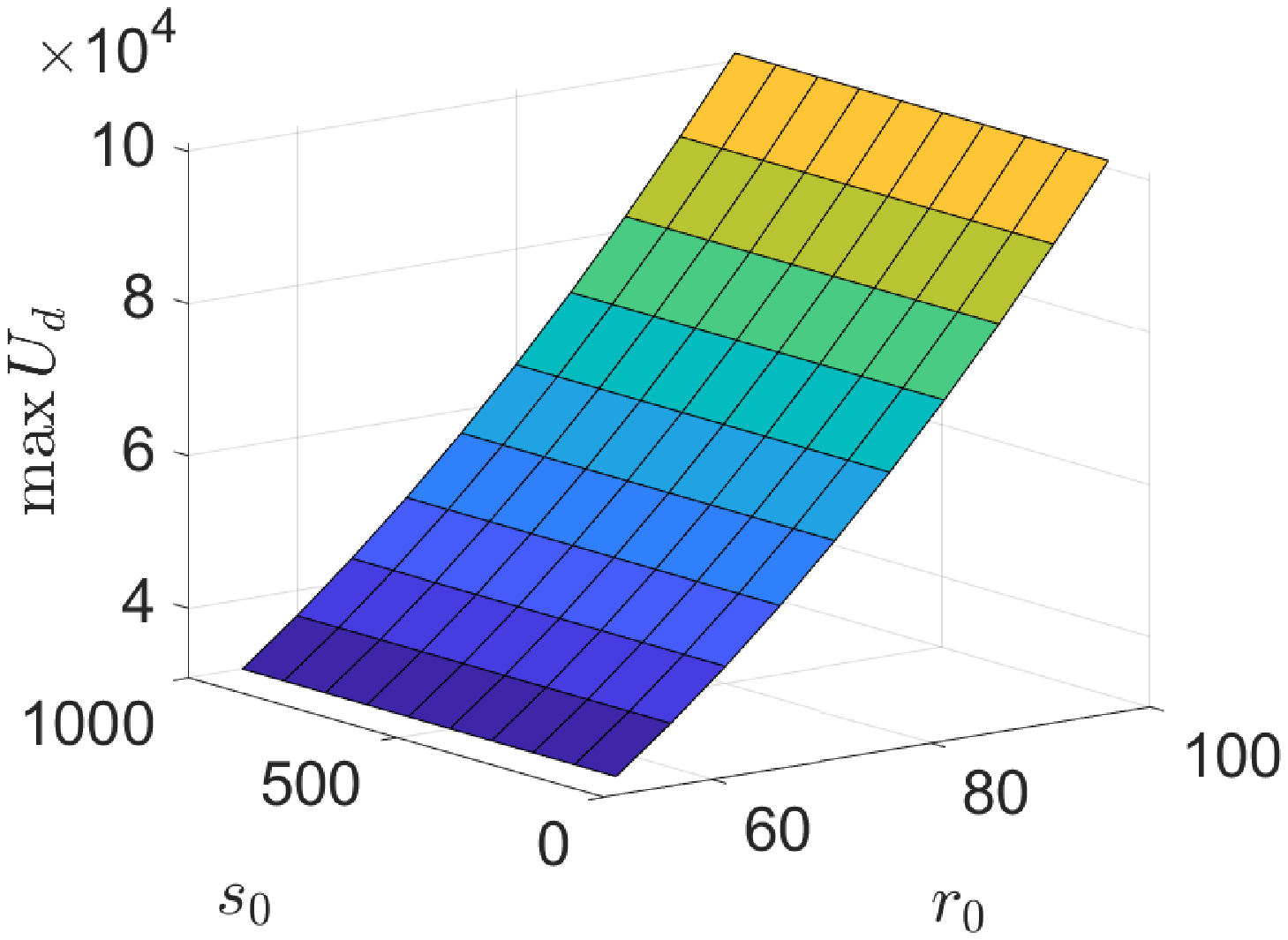}}
\subfigure{
\includegraphics[width=0.22\textwidth]{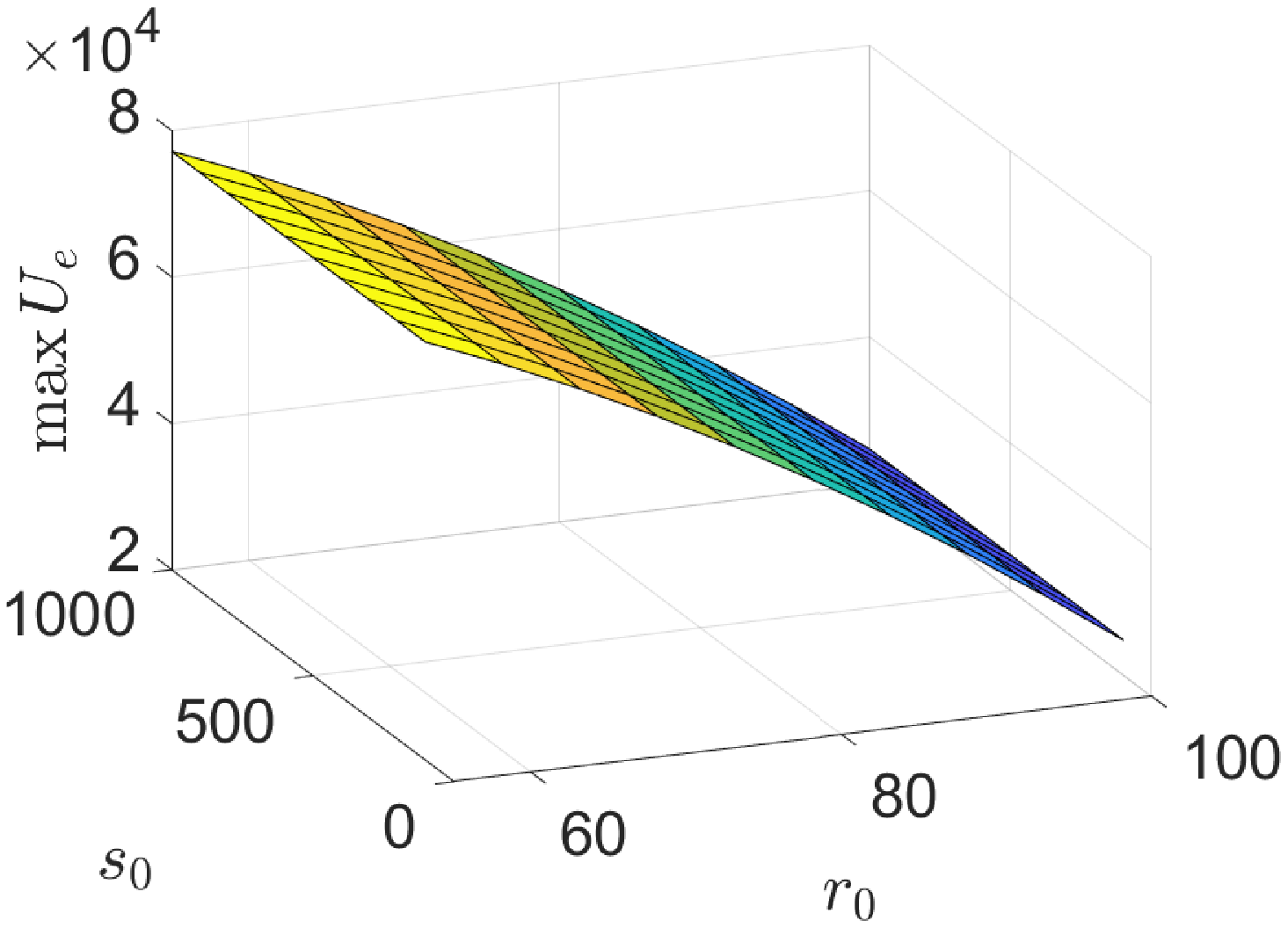}}
\caption{Maximized utility changing with $s_0$ and $r_0$.}
\label{fig:utility_s0_r0}
\end{figure}

\subsection{Participation Decision Making}\label{subsec:exp_participation}
We then evaluate the performance of the proposed participation decision making scheme in Section \ref{sec:solution}. Relevant scalar parameters are set as %to fixed unless specifically indicated, including 
$\alpha=10,\beta_i=10^{-3},\gamma_i=10^{-5}, \delta = 10^{-3}, ~\mathrm{and}~ w=3.5\times 10^5$. Other sets of parameters are also investigated, which produce similar results and thus are omitted. 
%In addition, all the experimental results reported here are average values obtaining from 30 times of repeated experiments for statistical confidence.

To begin with, we simulate the improved solution proposed in Section \ref{subsec:improve} and compare it with the direct solution in Section \ref{subsec:game} in terms of both the time complexity and optimization objective. 
To implement this series of experiments, we first derive error-related parameters $a,b$ defined in \eqref{eq:total} using the MNIST dataset \cite{lecun1998gradient} with up to 6,000 samples to train a 2-layer CNN classifier, generating the actual error results in Fig. \ref{fig:error_time}(a) which is fitted by the power-law function with $a=13.2,b=0.7$ (95\% confidence). 

Next, we set the data size of any device with 50 or 500 randomly, change the number of devices $n$ from 2 to 16, and run both the direct and improved solutions, where the improved solution is implemented with the number of small device sets $\xi = 2$. 
The comparison results of computational cost is presented in Fig. \ref{fig:error_time}(b) and the maximized total profit is reported in Table \ref{tab:profit}. One can find that the improved solution can bring approximate results as the direct one in terms of maximizing the global profit but consume much less time when $n$ is larger, which indicates that the proposed improved method can function effectively.

\begin{figure}
\subfigure[Error]{
\includegraphics[width=0.22\textwidth]{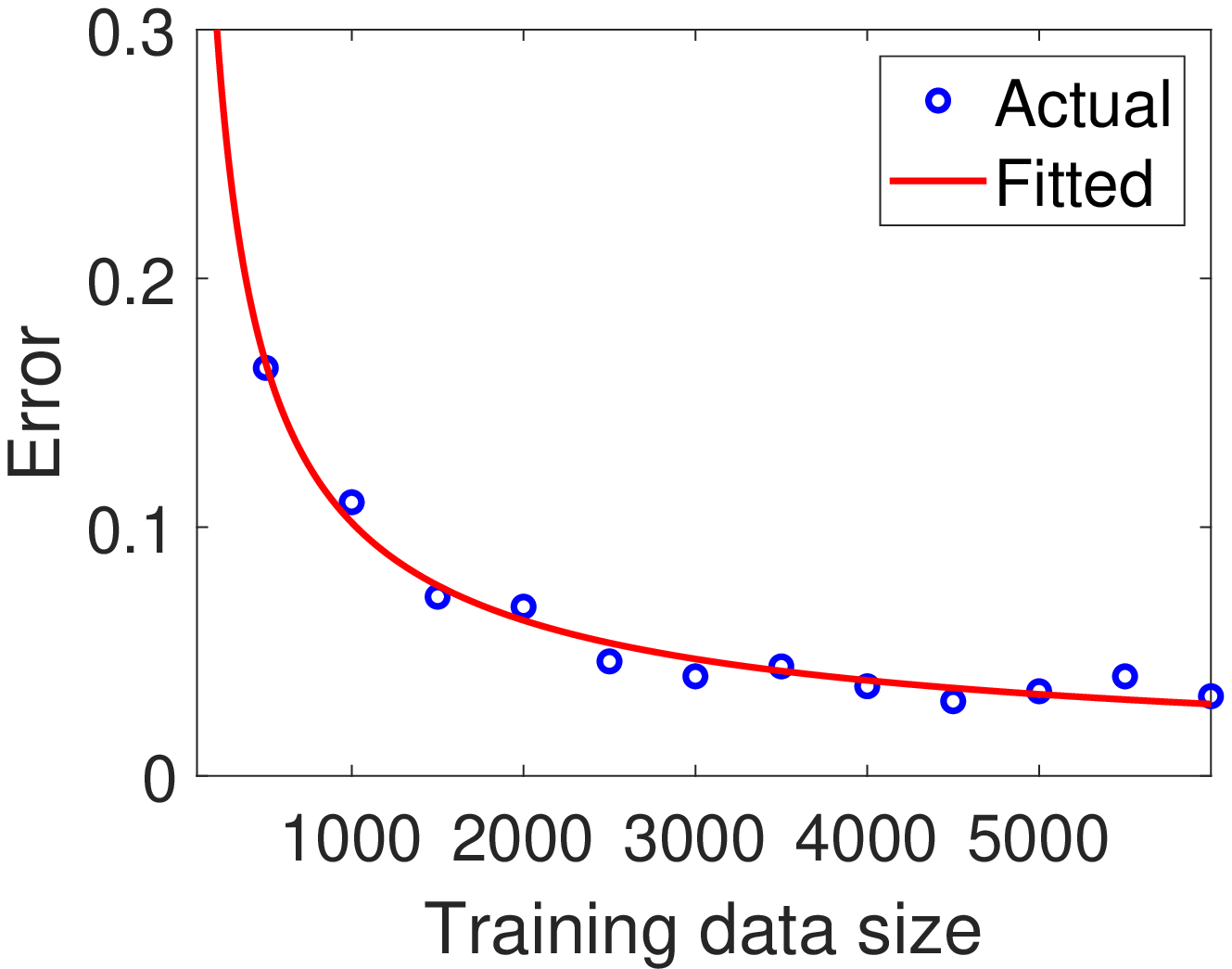}
%\caption{Error fitting.}
}
\subfigure[Computational cost]{
\includegraphics[width=0.22\textwidth]{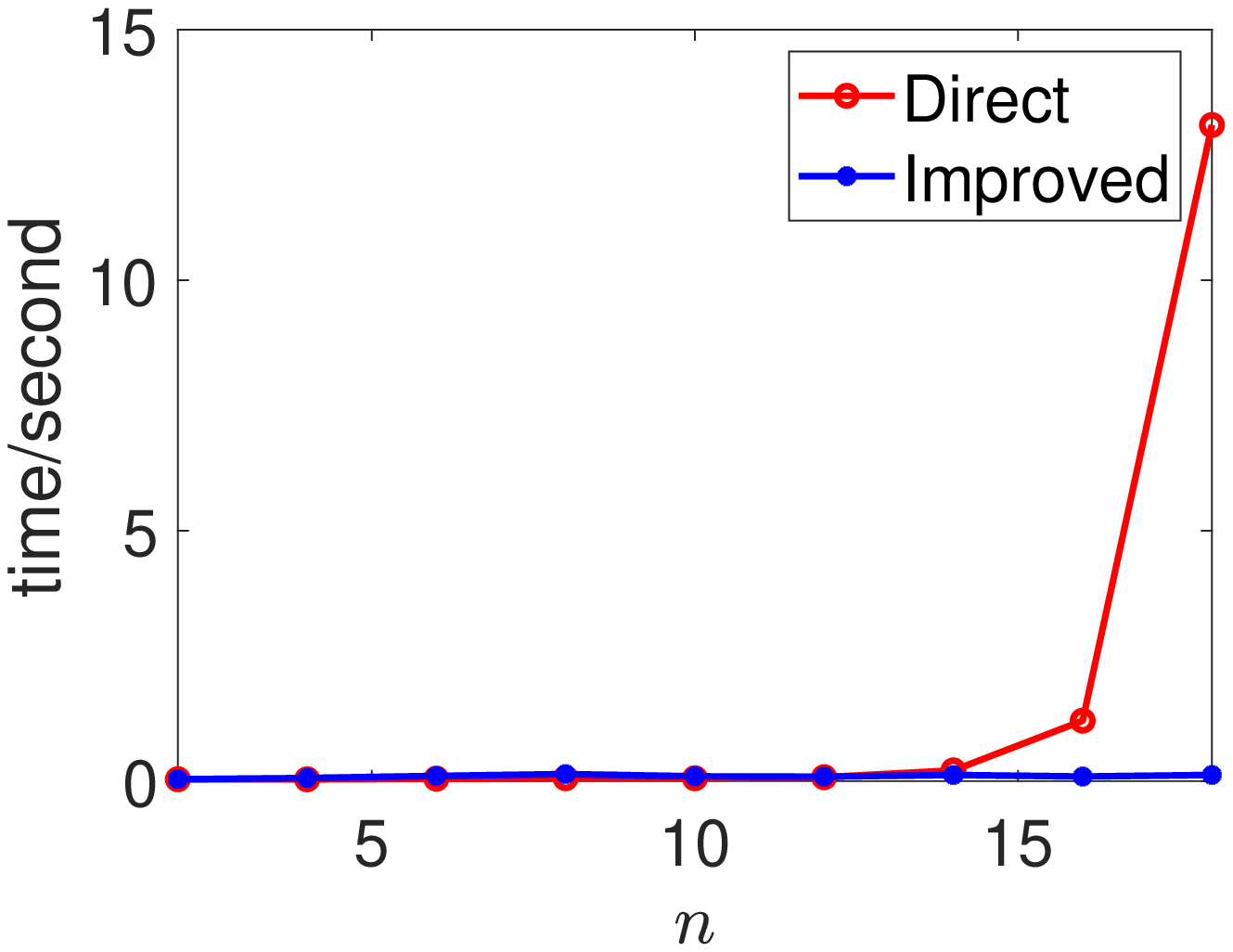}
%\caption{Time cost comparison.}
}
\caption{Error fitting and computational cost comparison.}
\label{fig:error_time}
\end{figure}

\begin{table}
\caption{Comparison of maximized total profit.}
\centering
\begin{tabular}{ |c||c|c|c|c|c|c| } 
\hline
Solution & $n$=2 & $n$=4 & $n$=6 & $n$=8  & $n$=10 & $n$=12\\
\hline
Direct & 0.95 &	2.60 &	2.84 &	2.94  & 2.99 & 3.02\\
\hline
Improved & 0.95 & 2.62 & 2.60 &	3.05 & 2.61 & 2.77 \\
\hline
\end{tabular}
\label{tab:profit}
\end{table}

Besides, we change the number of small device sets in the improved solution as $\xi\in \{2,3,4,5\}$ and study its impact on the computational cost. As shown in Fig. \ref{fig:time_setnumber}, we investigate four scenarios with different number of devices $n\in \{10,15,20,30\}$. For clear presentation, we report the results for $n = 10,15,20$ and $n = 10,20,30$, separately. 
From Fig. \ref{fig:time_setnumber}(a), we can see that for a given $n$, the larger the number of small device sets $\xi$, the higher the computation cost of the improved solution. This is because we simulate the operation of solving SGPM problems in a serialized manner where each SGPM takes some time to finish; while the number of devices in each small set seldom affects the running time for these given $n$. However, when we increase $n$ to a larger value, such as $n=30$ in the right-side Fig. \ref{fig:time_setnumber}(b), the running cost for $\xi=2$ will be far larger than all other cases. This is because every SGPM problem in the improved solution for $n=30$ and $\xi=2$ is corresponding to the GPM problem in the direction solution for $n=15$, costing much more time than smaller $n$, which can also been observed from the results in Fig. \ref{fig:error_time}(b). Despite this case, other results are still consistent with the feature of larger time cost for larger $\xi$.

\begin{figure}
\subfigure{
\includegraphics[width=0.22\textwidth]{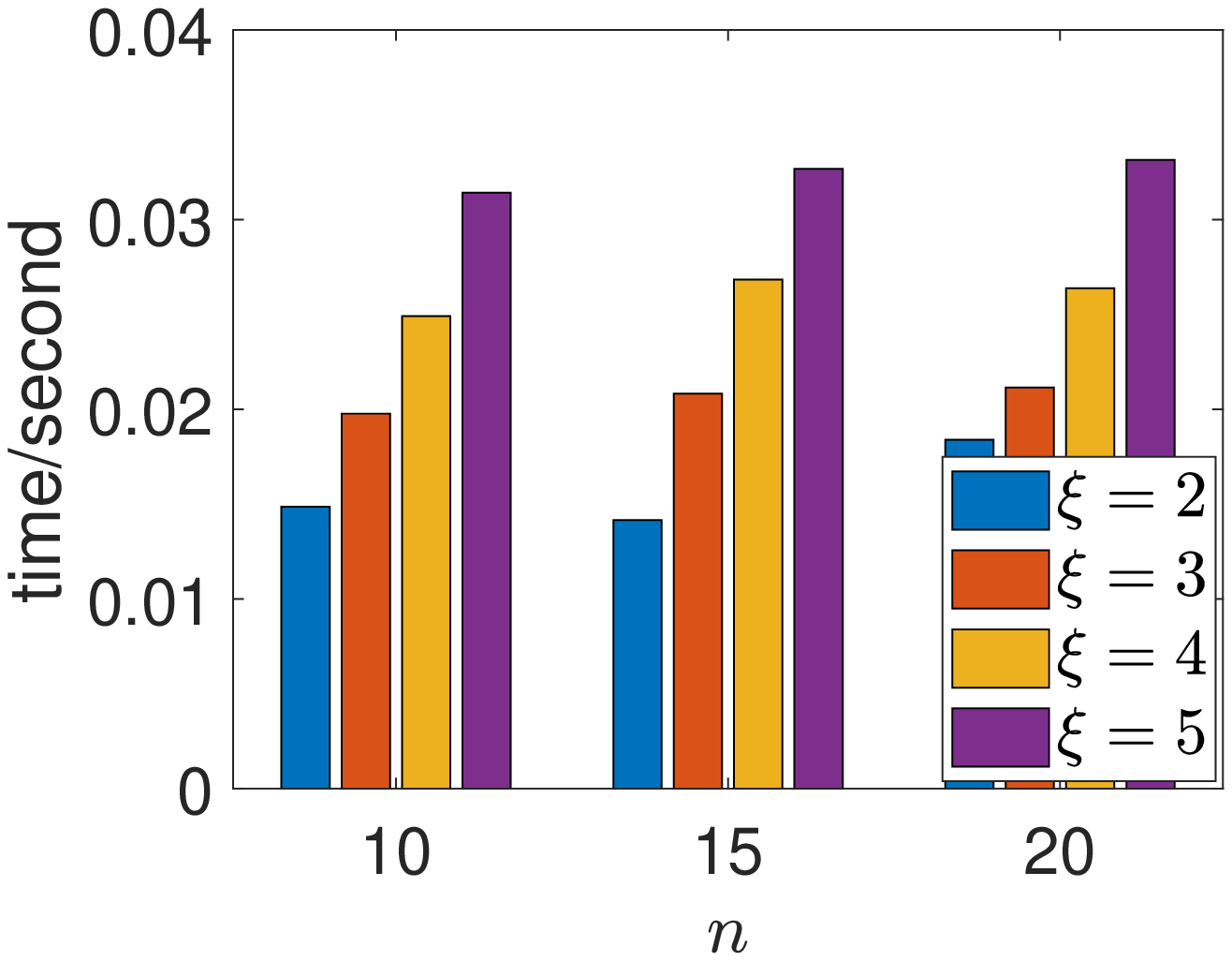}
%\caption{Error fitting.}
}
\subfigure{
\includegraphics[width=0.22\textwidth]{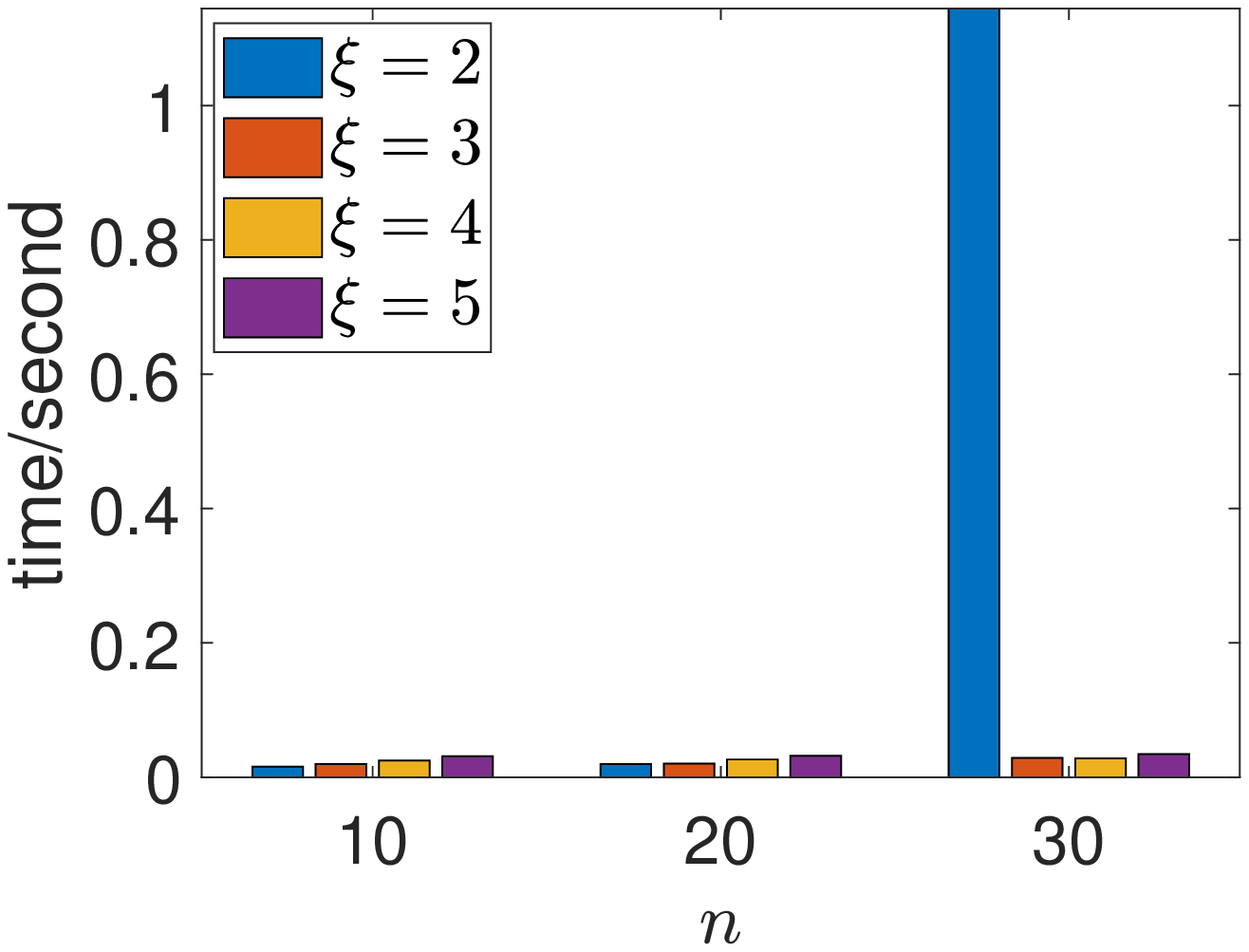}
%\caption{Time cost comparison.}
}
\caption{Computational cost of the improved solution with different number of devices $n$ and number of small sets $\xi$. }
\label{fig:time_setnumber}
\end{figure}

Further, we investigate the impact of data size $s_i$ on both the maximized total profit and participation decision results. In detail, we take $n=8$ as an example and select $d_1$ as a test sample with variable size $s_1\in \{50,100,200,400,600,800,1000\}$. To broadly study this problem, we consider two representative cases where all other devices have the same size of generated data, denoted by $\mathbf{s}_{-1}$, as 50 and 500, and the experimental results of two cases are reported in Figs. \ref{fig:case50} and \ref{fig:case500}, respectively. From Figs. \ref{fig:case50}(a) and \ref{fig:case500}(a), we can see that in both cases, the maximized total profit of all devices increases with $s_1$, revealing that the more data contributed to the edge server, the larger the total profit for all. This is reasonable since more data can facilitate training a better ML model to benefit all devices. While from Figs. \ref{fig:case50}(b) and \ref{fig:case500}(b), one can see that the participation decision vectors in two cases differ a lot. In particular, when $\mathbf{s}_{-1}=50$, $p_1$ becomes 1 earlier at $s_1=100$; but in the case of $\mathbf{s}_{-1}=500$, only when $s_1$ reaches 600, $p_1=1$ happens. This phenomenon is understandable as only when the data contribution of $d_1$ is large enough compared to others, should it take part in the FEL to help train a better model; otherwise, there will only be more resource consumption but nothing contributed to the model training.

\begin{figure}[h]
\subfigure[Total profit]{
\includegraphics[width=0.23\textwidth]{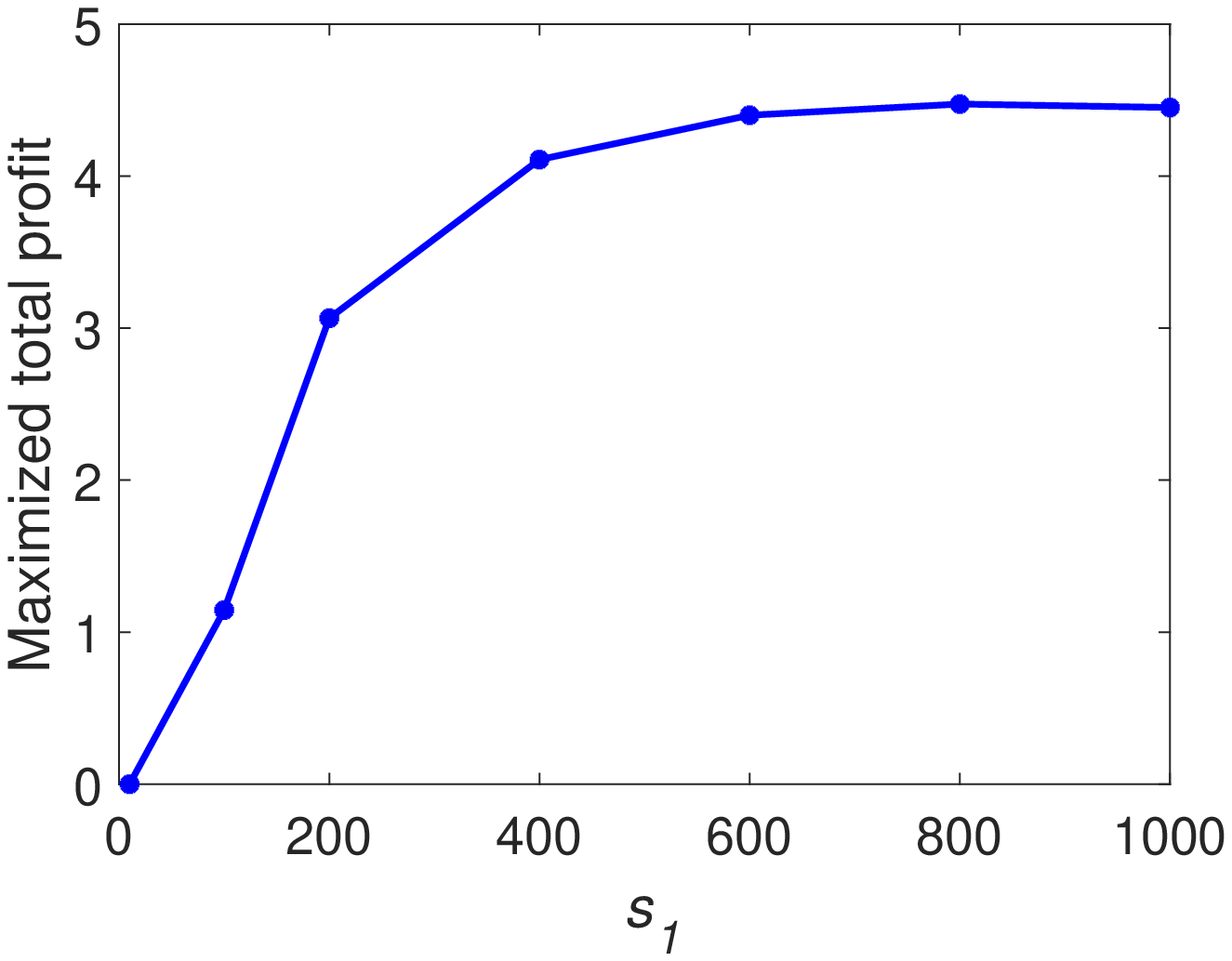}}
\subfigure[Participation decision]{
\includegraphics[width=0.21\textwidth]{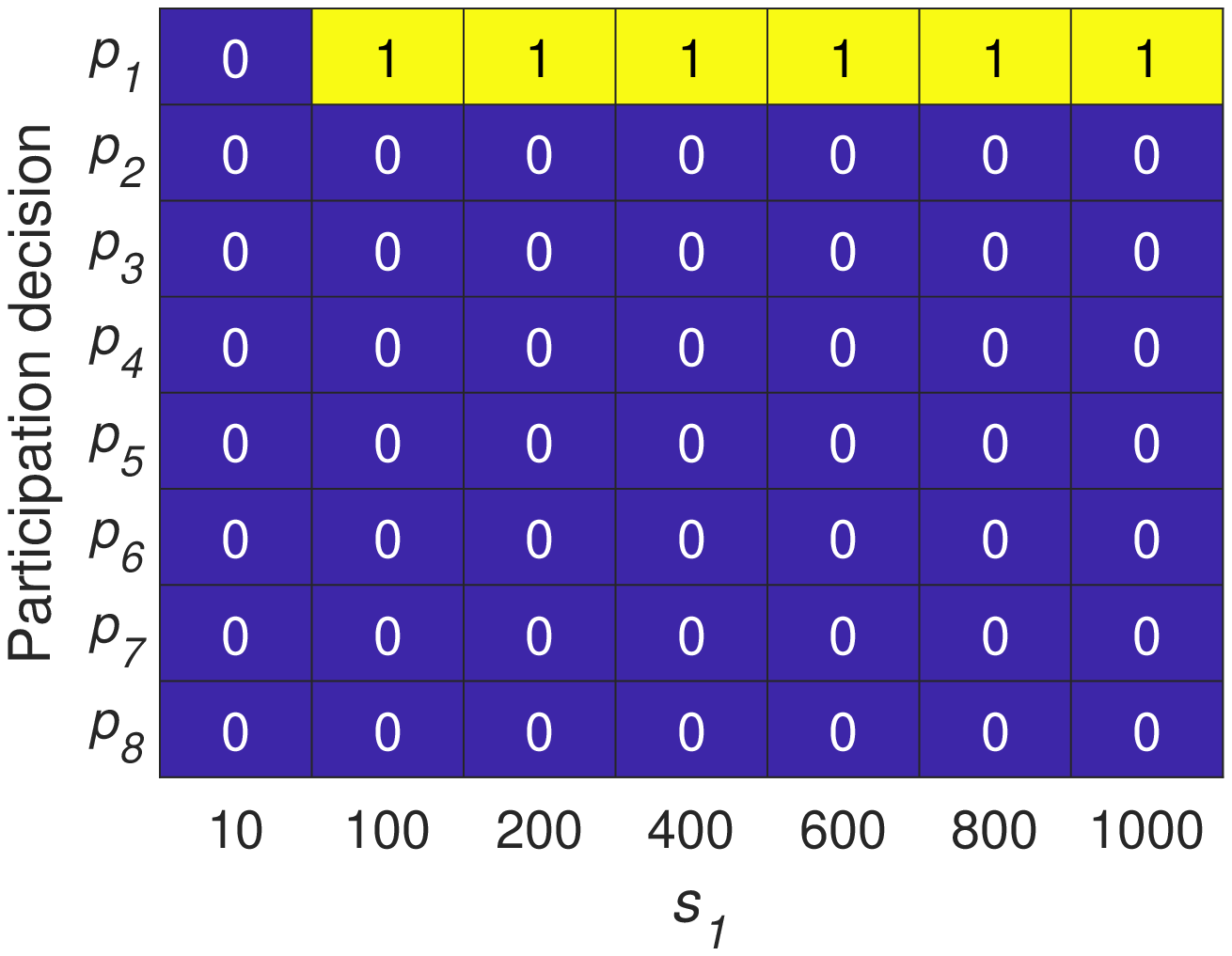}}
\caption{Case of $\mathbf{s}_{-1}=50$.}
\label{fig:case50}
\end{figure}

\begin{figure}[h]
\subfigure[Total profit]{
\includegraphics[width=0.23\textwidth]{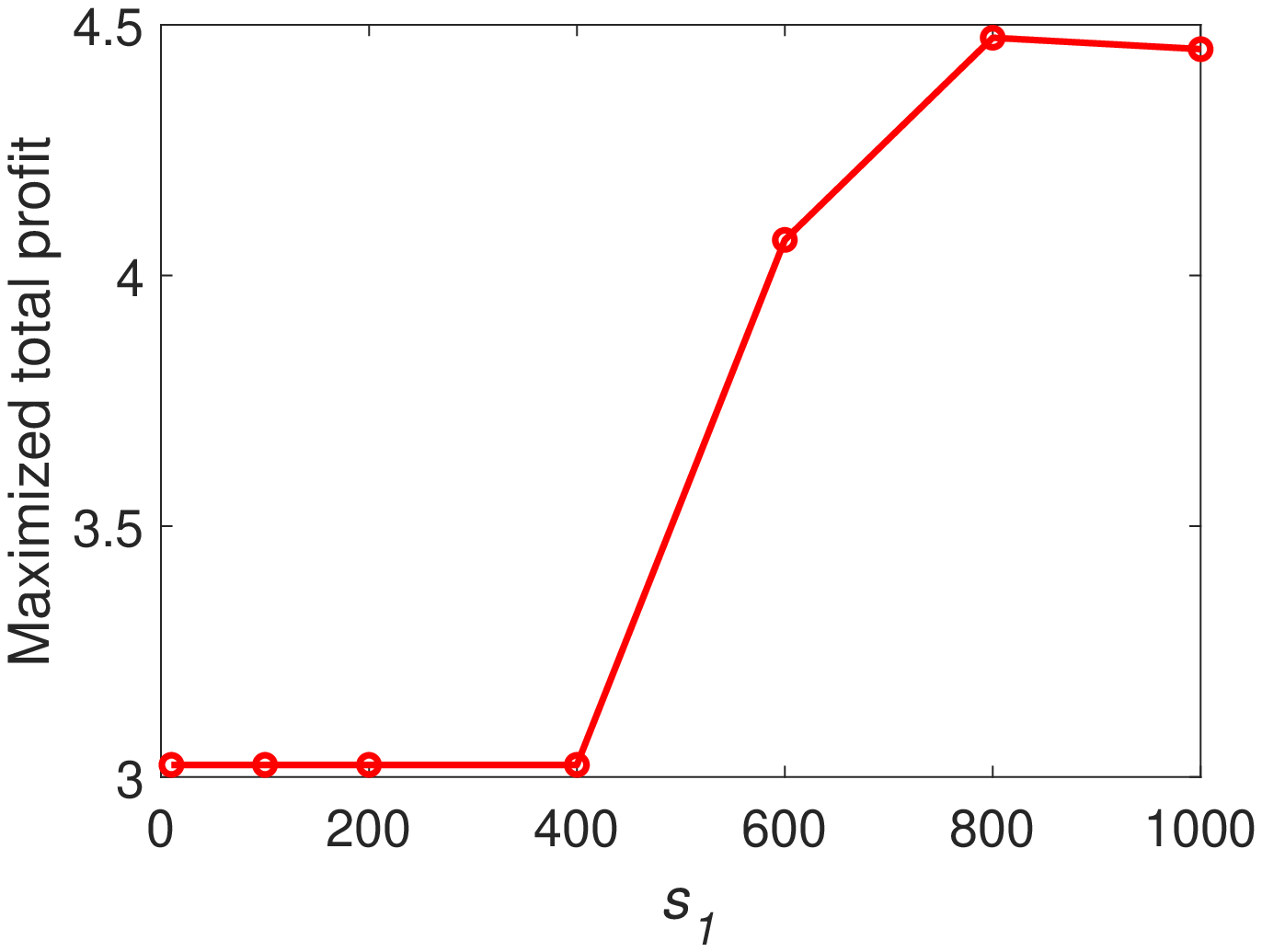}}
\subfigure[Participation decision]{
\includegraphics[width=0.21\textwidth]{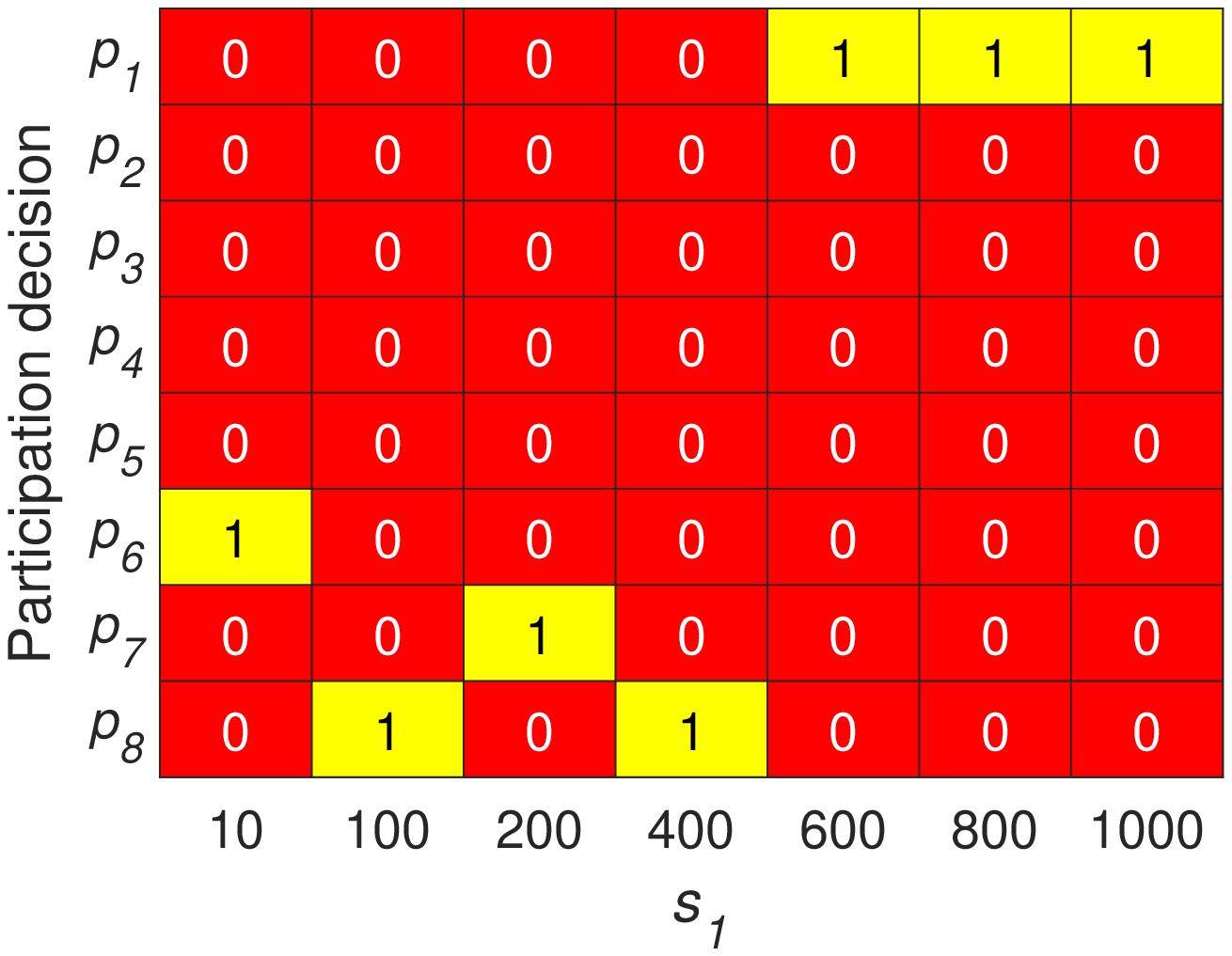}}
\caption{Case of $\mathbf{s}_{-1}=500$.}
\label{fig:case500}
\end{figure}

\section{Conclusion and Future Work}\label{sec:conclusion}

In this paper, we study the issue of FEL system composition considering more about the interests of edge devices with the aim of maintaining the efficiency and sustainability in the long term. Different from most of the existing studies on FEL performance improvement focusing on the optimization and control during the learning process, we take precautions to establish the best organization for FEL. Specifically, we first quantify the impact of local training data sizes to define a participation game sketching the relationships among all devices. 
And a mechanism design based truthful data size collection process is elaborated to prepare for the design and operation of the game-theoretic decision scheme.  
Then the correlated equilibrium is introduced to guarantee the individual optimum, facilitating the game-theoretic solution for participation decision making. An improved method is further proposed to reduce the computational complexity to polynomial time. 
Finally, both proposed schemes are evaluated with real-world data and simulation experiments.

For the future research, we will investigate the case of  lacking mutual trust between the edge server and devices in FEL, where the server might also behave maliciously for intentionally exploiting their contributions in learning. Besides, as mobile devices massively participate in FEL, their mobility results in more challenges for guaranteeing the learning performance, where cross-edge scenarios will be thoroughly explored.

%\section*{acknowledgement}
%This work is partially supported by NSF OAC-1839746 and NSF DGE-2011117.

\bibliographystyle{IEEEtran}
\bibliography{reference}

\begin{IEEEbiography}[{\includegraphics[width=1in,height=1.25in,clip,keepaspectratio]{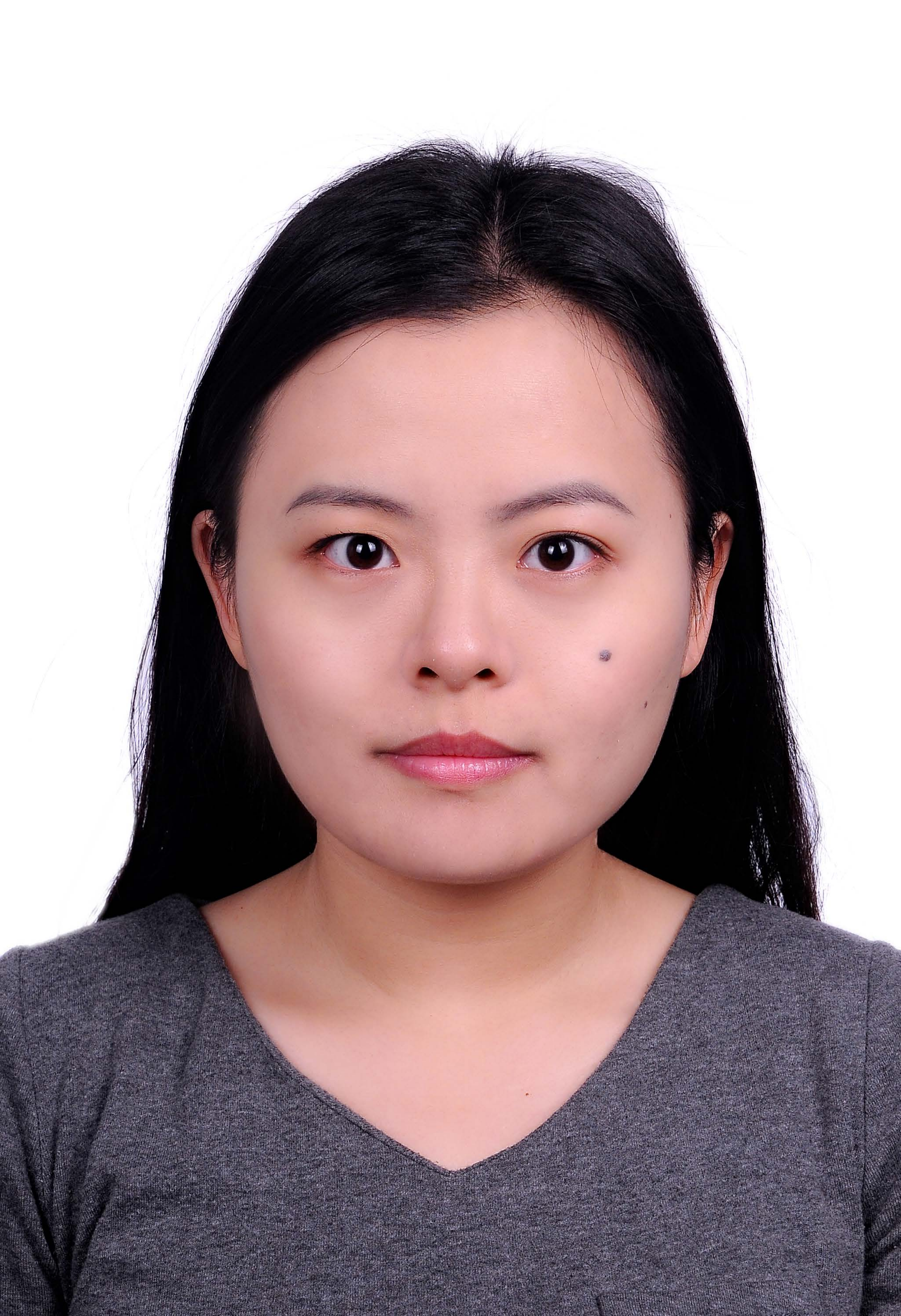}}]{Qin Hu} received her Ph.D. degree in Computer Science from the George Washington University in 2019. She is currently an Assistant Professor with the Department of Computer and Information Science, Indiana University-Purdue University Indianapolis (IUPUI). Her research interests include wireless and mobile security, mobile edge computing, and blockchain.
\end{IEEEbiography}

\begin{IEEEbiography}[{\includegraphics[width=1in,height=1.25in,clip,keepaspectratio]{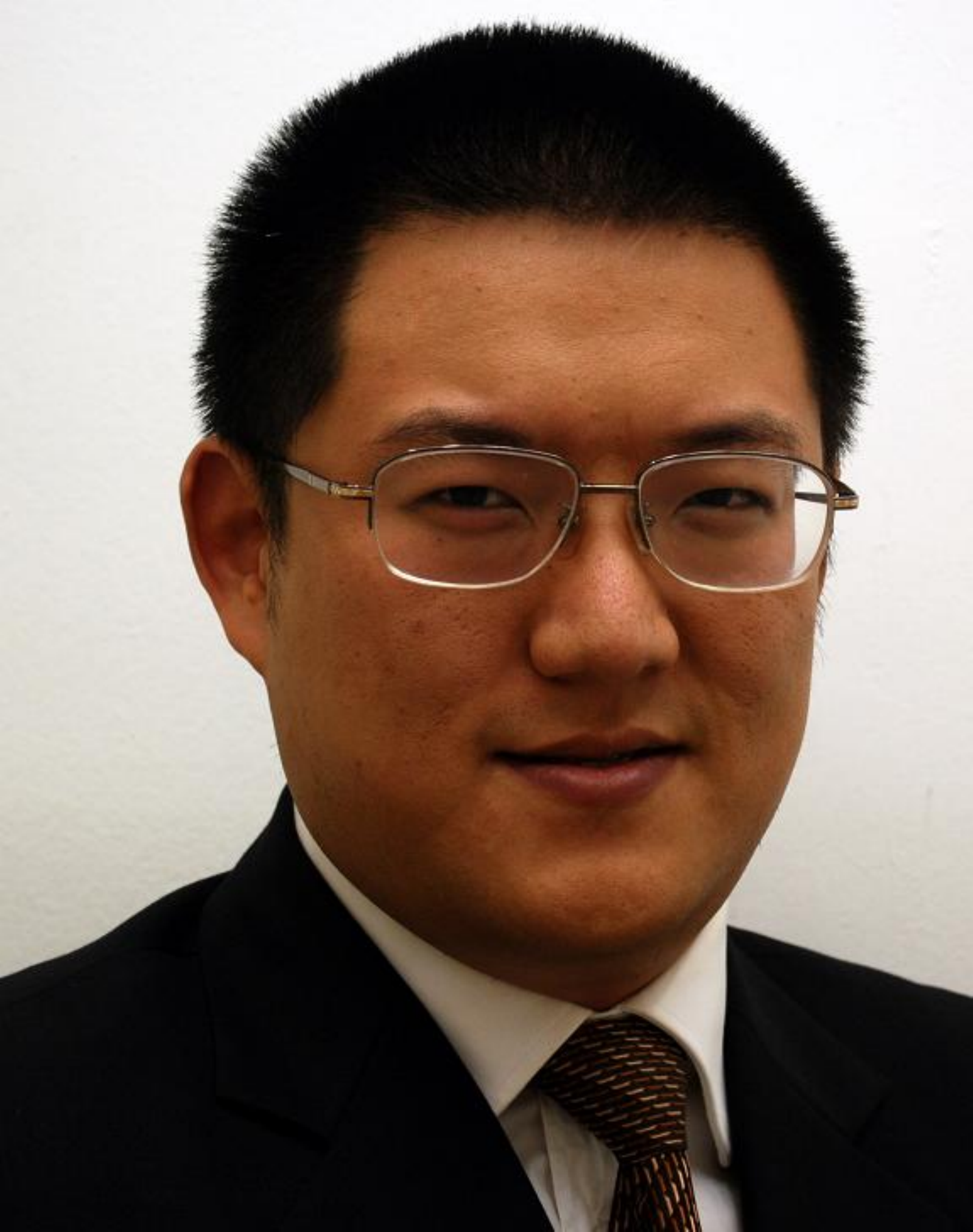}}]{Feng Li} received the Ph.D. degree in Computer Science from Florida Atlantic University in August 2009. He is an Associate Professor of Computer and Information Technology with Indiana University–Purdue University Indianapolis (IUPUI). He has published more than 50 papers in top conferences, including the INFOCOM and ICDCS. His research interests include the areas of cybersecurity and trust issues, cloud, and mobile computing.
\end{IEEEbiography}

\begin{IEEEbiography}[{\includegraphics[width=1in,height=1.25in,clip,keepaspectratio]{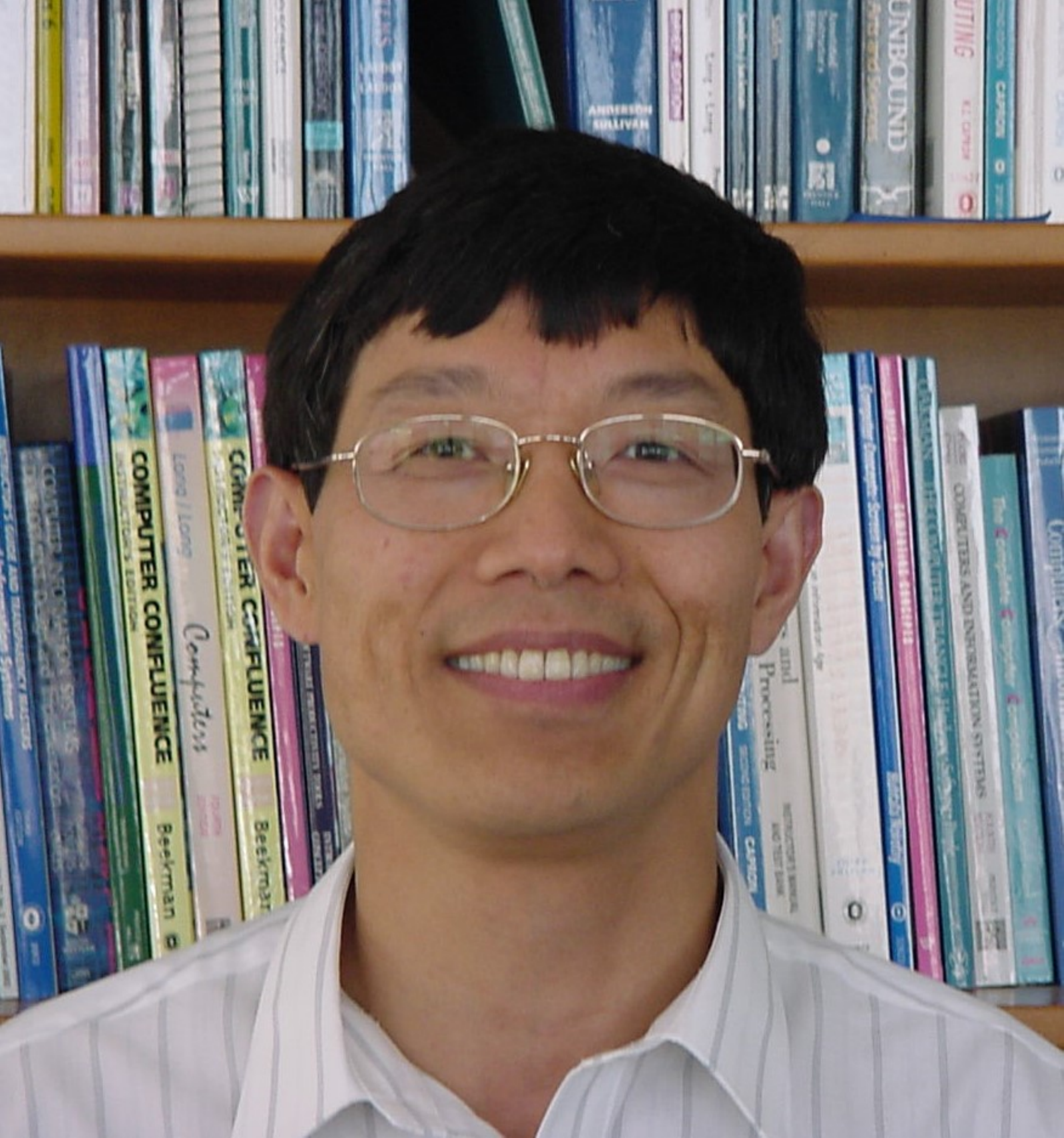}}]{Xukai Zou} received the Ph.D. degree in Computer
Science from the University of Nebraska–Lincoln,
Lincoln, NE, USA.
He is currently a Professor with the
Department of Computer and Information Sciences,
Indiana University–Purdue University Indianapolis,
Indianapolis, IN, USA. His research has been supported by NSF, the Department of Veterans Affairs,
and Industry such as Cisco, San Jose, CA, USA.
His current research interests include cryptography,
communication networks and security, secret sharing, health and personal genomic security and privacy, design and analysis of
algorithms, and image and data compression.
\end{IEEEbiography}

\begin{IEEEbiography}[{\includegraphics[width=1in,height=1.25in,clip,keepaspectratio]{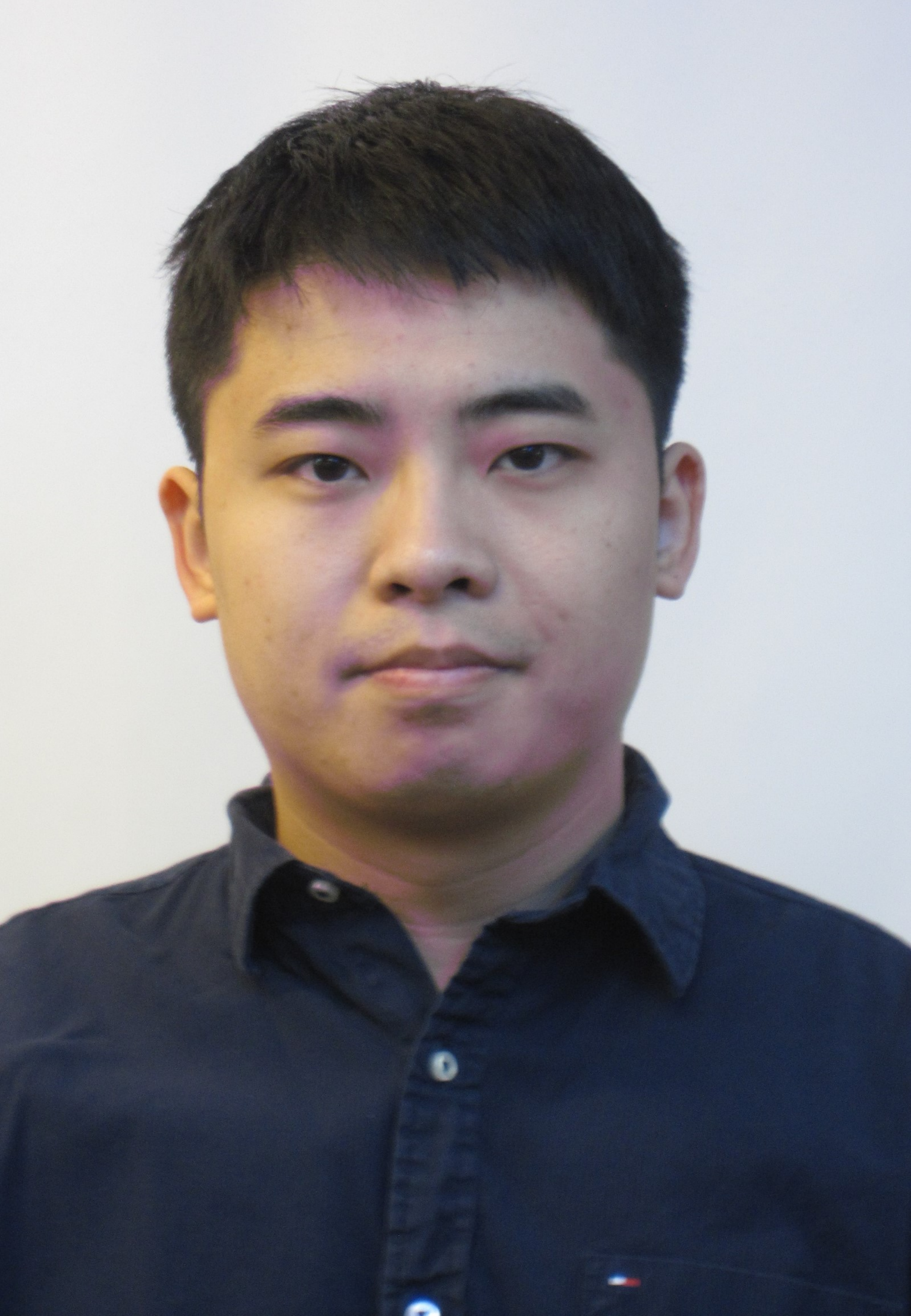}}]{Yinhao Xiao} received his Ph.D. degree in Computer Science from the George Washington University in 2019. He is currently a Faculty Member with the School of Information Science, Guangdong University of Finance and Economics, Guangzhou, China. %He has published work in top conferences and journals, including the Proceedings of the IEEE, IEEE INFOCOM, and the IEEE Internet of Things (IoT) Journal. 
His current research interests include the IoT security, smartphone security, and binary security.
\end{IEEEbiography}

\end{document}